\keywords{dependent type theory, homotopy theory, Moore
  path, topos}
\theoremstyle{plain} 
\newcommand{\abs}[2]{\langle#1\tot#2\rangle}
\newcommand{\at}{\,}
\newcommand{\bij}{\cong}
\newcommand{\bimp}{\Leftrightarrow}
\newcommand{\Bool}{\mathtt{Bool}}
\newcommand{\C}{\mathbf{C}}
\newcommand{\code}[1]{\mathtt{#1}}
\newcommand{\comp}{\circ}
\renewcommand{\conj}{\wedge}
\newcommand{\Cov}{\mathit{J}}
\newcommand{\CwF}{\mathbf{C}}
\newcommand{\disj}{\vee}
\newcommand{\defeq}{\mathrel{{:}{\equiv}}}
\newcommand{\E}{\mathcal{E}}
\newcommand{\ent}{\vdash}
\newcommand{\false}{\mathtt{false}}
\newcommand{\Fib}{\mathbf{Fib}}
\newcommand{\from}[2]{#2_{{\totop}#1}}
\DeclareMathOperator{\fst}{\mathtt{fst}}
\newcommand{\fun}{\shortrightarrow}
\DeclareMathOperator{\funext}{\mathtt{funext}}
\DeclareMathOperator{\happly}{\mathtt{happly}}
\newcommand{\Ho}{\mathbf{Ho}}
\newcommand{\I}{\pos{\R}}
\newcommand{\id}{\mathtt{id}}
\DeclareMathOperator{\Id}{\mathtt{Id}}
\DeclareMathOperator{\idp}{\mathtt{idp}}
\newcommand{\imp}{\Rightarrow}
\newcommand{\inl}{\mathtt{inl}}
\newcommand{\inr}{\mathtt{inr}}
\newcommand{\iso}{\cong}
\DeclareMathOperator{\J}{\mathtt{J}}
\newcommand{\K}{\mathtt{K}}
\DeclareMathOperator{\lift}{\mathtt{lift}}
\renewcommand{\min}{\mathtt{min}}
\DeclareMathOperator{\minus}{\mathtt{-}}
\newcommand{\mono}{\rightarrowtail}
\providecommand{\monus}{
  \mathbin{
    \vphantom{+}
    \text{
      \normalfont
      \mathsurround=0pt 
      \ooalign{
        \noalign{\kern-.35ex}
        \hidewidth$\smash{\cdot}$\hidewidth\cr 
        \noalign{\kern.35ex}
        $-$\cr 
      }%
    }%
  }%
}
\newcommand{\morphism}{\longrightarrow}
\newcommand{\mult}{\cdot}
\newcommand{\N}{\mathbb{N}}
\newcommand{\op}{\mathrm{op}}
\newcommand{\pair}[2]{\langle#1\mathbin{,}#2\rangle}
\newcommand{\pth}{\sim}
\DeclareMathOperator{\Path}{\wp}
\newcommand{\pcomp}{\bullet}
\newcommand{\pcong}{\mathbin{\text{\normalfont\texttt{'}}}}
\newcommand{\pos}[1]{#1_{\text{\normalfont\texttt{+}}}}
\newcommand{\R}{\mathtt{R}}
\DeclareMathOperator{\refl}{\mathtt{Refl}}
\DeclareMathOperator{\rev}{\mathtt{rev}}
\newcommand{\RR}{\mathbb{R}}
\newcommand{\Set}{\mathbf{Set}}
\newcommand{\Sh}{\mathbf{Sh}}
\newcommand{\shape}[1]{\mathopen{\normalfont\text{\texttt{|}}}#1
  \mathclose{\normalfont\text{\texttt{|}}}}
\DeclareMathOperator{\snd}{\mathtt{snd}}
\newcommand{\source}{\partial^{\text{\normalfont\texttt{-}}}}
\newcommand{\src}{\mathtt{0}}
\DeclareMathOperator{\supp}{\mathtt{sup}}
\newcommand{\T}{\mathbf{T}}
\newcommand{\target}{\partial^{\text{\normalfont\texttt{+}}}}
\newcommand{\Top}{\mathbf{Haus}}
\newcommand{\tot}{\leqslant}
\newcommand{\totop}{\geqslant}
\newcommand{\trpt}[1]{#1_*}
\newcommand{\true}{\mathtt{true}}
\newcommand{\TT}{\mathtt{T}}
\newcommand{\UU}{\mathtt{U}}
\newcommand{\unit}{\mathtt{1}}
\newcommand{\upto}[2]{#2_{{\tot}#1}}
\DeclareMathOperator*{\W}{\mathtt{W}}
\newcommand{\Y}{\mathtt{Y}}
\begin{document}

\title[Models of Type Theory Based on Moore Paths] {Models of Type
  Theory Based on Moore Paths\rsuper*} \titlecomment{{\lsuper*}This is a
  revised and expanded version of a paper with the same name that
  appeared in the proceedings of the \emph{2nd International
    Conference on Formal Structures for Computation and Deduction}
  (FSCD 2017)}

\author[I.~Orton]{Ian Orton}
\address{University of Cambridge Department of Computer Science and Technology, UK}
\email{rio@cam.ac.uk}
\email{andrew.pitts@cl.cam.ac.uk}

\author[A.~M.~Pitts]{Andrew M. Pitts}	

\begin{abstract}
  This paper introduces a new family of models of intensional
  Martin-L\"of type theory. We use constructive ordered algebra in
  toposes. Identity types in the models are given by a notion of Moore
  path. By considering a particular gros topos, we show that there is
  such a model that is \emph{non-truncated}, i.e.~contains non-trivial
  structure at all dimensions. In other words, in this model a type in
  a nested sequence of identity types can contain more than one
  element, no matter how great the degree of nesting. Although
  inspired by existing non-truncated models of type theory based on
  simplicial and cubical sets, the notion of model presented here
  is notable for avoiding any form of Kan filling condition in the
  semantics of types.
\end{abstract}

\maketitle

\section{Introduction}
\label{sec:int}

Homotopy Type Theory~\cite{HoTT} has re-invigorated the study of the
intensional version of Martin-L\"of type
theory~\cite{Martin-LoefP:intttp}. On the one hand, the language of
type theory helps to express synthetic constructions and arguments in
homotopy theory and higher-dimensional category theory. On the other
hand, the geometric and algebraic insights of those branches of
mathematics shed new light on logical and type-theoretic notions. One
might say that the familiar \emph{propositions-as-types} analogy has
been extended to \emph{propositions-as-types-as-spaces}. In
particular, under this analogy there is a \emph{path-oriented} view of
intensional (i.e.~proof-relevant) equality: proofs of equality of two
elements $x,y$ of a type $A$ correspond to elements of a Martin-L\"of
identity type $\Id_A x\,y$ and behave analogously to paths between two
points $x,y$ in a space $A$. The complicated internal structure of
intensional identity types relates to the homotopy classes of path
spaces. To make the analogy precise and to exploit it, it helps to
have a wide range of models of intensional type theory that embody
this path-oriented view of equality in some way. This paper introduces
a new family of such models, constructed from \emph{Moore
  paths}~\cite{MooreJC:thefse} in toposes.

Let $\pos{\RR} = \{r\in\RR \mid r\geq 0\}$ be the real half-line with
the usual topology. Classically, a Moore path between points $x$ and
$y$ in a topological space $X$ is a pair $p = (f,r)$ where
$r\in\pos{\RR}$ and $f:\pos{\RR}\morphism X$ is a continuous function
with $f\,0 =x$ and $f\,r' = y$ for all $r'\geq r$. We will write
$x\pth y$ for the set of Moore paths from $x$ to $y$, with $X$
understood from the context. Clearly there is a Moore path from $x$ to
$y$ iff there is a conventional path, that is, a continuous function
$f:[0,1]\morphism X$ with $f\,0 = x$ and $f\,1=y$. The advantage of
Moore paths is that they admit degeneracy and composition operations
that are unitary and associative up to equality; whereas for
conventional paths these identities only hold up to homotopy.
Specifically, one has the following Moore paths:
\[
  \idp x \defeq (\lambda\, t \fun x,0)\in x\pth x
\]
and for all $p=(f,r)\in x\pth y$  and  $q=(g,s)\in y\pth z$  
\begin{align*}
    q\pcomp p &\defeq (g\pcomp^r f, r+s) \in x \pth z \\
  \text{where\ } (g\pcomp^r f)\,t &\defeq
  \begin{cases}
    f\,t &\text{if $t\leq r$}\\
    g(t - r) &\text{if $r\leq t$}
  \end{cases}
  \qquad(t\in\pos{\RR})
\end{align*}
These definitions satisfy $p\pcomp{\idp x} = p = {\idp y}\pcomp p$ and
$r \pcomp(q\pcomp p) = (r\pcomp q)\pcomp p$. In Section~\ref{sec:moopt}
we abstract from $\RR$ some simple
order-algebraic~\cite[chapter~VI]{BourbakiN:alg} structure sufficient
for the above definitions to work in a constructive algebraic setting,
rather than a classical topological one. Initially the structure of an
\emph{ordered abelian group} in some
topos~\cite{JohnstonePT:skeett,MacLaneS:shegl} suffices and then we
extend that to an \emph{ordered commutative ring} to ensure the models
satisfy function extensionality.

In Sections~\ref{sec:tap}--\ref{sec:uni} we use this structure in toposes
to develop a family of models of intensional Martin-L\"of type theory
with: identity types given by Moore paths, $\Sigma$-types, $\Pi$-types
satisfying function extensionality, inductive types (we just consider
disjoint unions and $W$-types) and Tarski-style universes. By
considering a particular \emph{gros} topos~\cite{SGA4} in
Section~\ref{sec:grotm} we get a \emph{non-truncated} instance of our
model construction, in other words one where iterated identity types
$\Id_A,\Id_{\Id_A},\Id_{\Id_{\Id_A}},\ldots$ can be non-trivial to any
depth of iteration.

The observation that the strictly associative and unitary nature of
composition of Moore paths aids in the interpretation of Martin-L\"of
identity types is not new; see for example \cite[Sections~5.1 and
5.5]{VanDenBergB:topsmi}. However, the fact that function
extensionality can hold for identity based on Moore paths
(Theorem~\ref{thm:funext}) is new and quite surprising, given that
such paths carry an intensional component, namely their ``shape''
(Definition~\ref{def:moopo}). Another novelty of our approach concerns
the fact that existing non-truncated models of type theory typically
make use of some form of Kan filling condition~\cite{KanDM:comdhg} to
define a class of \emph{fibrant} families of types with respect to
which path types behave as identity types. One of the contributions of
this paper is to show that one can avoid any form of Kan filling and
still get a non-truncated model of intensional Martin-L\"of type
theory.  Instead we use path composition and a simple notion of
fibrant family phrased just in terms of the usual operation of
transporting elements along equality proofs
(Definition~\ref{def:tapf}). As a consequence every type, regarded as
a family over the terminal type, is fibrant in our setting. In
particular, this means that intervals are first-class types in our
models, something which is not true for existing path-oriented models,
such as the classical simplicial~\cite{LumsdainePL:simmuf} and
constructive cubical
sets~\cite{CoquandT:modttc,CoquandT:cubttc,LicataDR:carctt} models;
and constructing universes in our setting does not need proofs of
their fibrancy.

\subsection*{Informal type theory} 

The new models of type theory we present are given in terms of
\emph{categories with families}
(CwF)~\cite{DybjerP:intt,HofmannM:synsdt}. Specifically, we start with
an arbitrary topos $\E$, to which can be associated a CwF, for example
as in~\cite{LumsdainePL:locumo,AwodeyS:natmht}. We write $\E(\Gamma)$
for the set of families indexed by an object $\Gamma\in\E$ and
$\E(\Gamma\ent A)$ for the set of elements of a family
$A\in\E(\Gamma)$. One can make $\E(\Gamma)$ into a category whose
morphisms between two families $A,B\in\E(\Gamma)$ are elements in
$\E(\Gamma.A\ent B)$, where $\Gamma.A$ is the comprehension object
associated with $A$. The category $\E(\Gamma)$ is equivalent to the
slice category $\E/\Gamma$, the equivalence being given on objects by
sending families $A\in\E(\Gamma)$ to corresponding projection
morphisms $\Gamma.A\morphism \Gamma$.

We then construct a new CwF by considering families in $\E$ equipped
with certain extra structure (the \emph{transport-along-paths}
structure of Definition~\ref{def:tapf}); the elements of a family in
the new CwF are just those of the underlying family in $\E$. One could
describe this construction using the language of category
theory. Instead, as in~\cite{PittsAM:aximct-jv,PittsAM:intumh} we find
it clearer to express the construction using an internal language for
the CwF associated with $\E$ -- a combination of higher-order
predicate logic and extensional type theory, with an impredicative
universe of propositions given by the subobject classifier in $\E$;
see~\cite{MaiettiME:modcdt}. This use of internal language allows us
to give an appealingly simple description of the type constructs in
the new CwF. In the text we use this language informally (analogously
to the way that \cite{HoTT} develops Homotopy Type Theory). In
particular the typing contexts of the judgements in the formal
version, such as $[x_0:A_0,x_1:A_1(x_0),x_2:A_2(x_0,x_1)]$, become
part of the running text in phrases like ``given $x_0:A_0$,
$x_1:A_1(x_0)$ and $x_2:A_2(x_0,x_1)$, then\ldots''; and when we refer
to a ``function in the topos'' (as opposed to one of its morphisms) we
mean a term of function type in its internal language.

The arguments we give in Sections~\ref{sec:moopt}--\ref{sec:uni} are
all constructively valid and in fact do not require the impredicative
aspects of topos theory; indeed we have used Agda~\cite{Agda} (with
uniqueness of identity proofs and postulates for quotient types) as a
tool to develop and experiment with the material presented in those
sections.  The specific model presented in Section~\ref{sec:grotm}
uses topological spaces within classical mathematics.

\section{Ordered Rings in  a Topos}
\label{sec:ordrt}

Let $\E$ be a topos with a natural number
object~\cite{JohnstonePT:skeett,MacLaneS:shegl}. A \emph{total order}
on an object $\R\in\E$ is given by a subobject ${\tot}\mono\R\times\R$
which is not only reflexive, transitive and anti-symmetric, but also
satisfies that the join of the subobject $\tot$ and its opposite
${\tot}\comp\pair{\pi_2}{\pi_1}$ is the whole of $\R\times\R$; in
other words the following formula of the internal language of $\E$ is
satisfied:
\begin{equation}
  \label{eq:1}
  (\forall i,j:\R)\; i\tot j \;\disj\; j\tot i
\end{equation}
As mentioned in the Introduction, in this paper we use such formulas
of the internal language to express properties of $\E$ instead of
giving the category-theoretic version of the property.  Note that
since $\E$ may not be Boolean, we do not necessarily have the
trichotomy property
$(\forall i,j:\R)\; i<j \;\disj\; i=j \;\disj\; j<i$ for the
associated strict order relation $i < j \;\defeq \neg(j\tot i)$. So
when defining functions by cases using \eqref{eq:1} we have to verify
that the clauses for $i\tot j$ and for $j\tot i$ agree on the overlap,
where $i=j$ holds by antisymmetry. For example, the \emph{positive
  cone}
\begin{equation}
  \label{eq:5}
  \I \defeq \{i:\R \mid \src\tot i\}
\end{equation}
associated with $\R$ has a binary operation of \emph{minimum}
$\min:\I\times\I\morphism\I$ well-defined by the following properties
\begin{equation}
  \label{eq:6}
  (\forall i,j:\I)
  \begin{array}[c]{l}
   i\tot j \;\imp\; \min(i, j) = i\\
   j\tot i \;\imp\; \min(i, j) = j
  \end{array}
\end{equation}  

\begin{figure}
  \emph{Total order}
  \begin{gather}
    (\forall i:\R)\;i \tot i\label{eq:38}\\
    (\forall i,j,k:\R)\; {i\tot j}\;\conj\; {j\tot k} \;\imp\; {i \tot k}\\
    (\forall i,j:\R)\; {i\tot j} \;\conj\; {j \tot i} \;\imp\; {i=j}\\
    (\forall i,j:\R)\; {i\tot j} \;\disj\; {j\tot i}\label{eq:2}
  \end{gather}
  \emph{Abelian group}
  \begin{gather}
    (\forall i,j,k:\R)\; i+(j+k) = (i+j)+k\label{eq:9}\\
    (\forall i:\R)\;\src+i = i\\
    (\forall i,j:\R)\;i+j = j+i\\
    (\forall i:\R)\; i+(-i) = \src 
  \end{gather}
  \emph{Addition is order preserving}
  \begin{equation}
    \label{eq:46}
    (\forall i,j,k:\R)\;{i\tot j} \;\imp\; {k+i}\tot{k+j}
  \end{equation}
  \emph{Multiplicative abelian monoid}
  \begin{gather}
    (\forall i,j,k:\R)\; i\mult(j\mult k) = (i\mult j)\mult k\label{eq:43}\\
    (\forall i:\R)\;\unit\mult i = i\\
    (\forall i,j:\R)\;i\mult j = j\mult i
  \end{gather}
  \emph{Multiplication distributes over addition}
  \begin{equation}
    \label{eq:37}
    (\forall i,j,k:\R)\;i\mult(j+k) = (i\mult j)+(i\mult k)
  \end{equation}
  \emph{Positive elements are closed under multiplication}
  \begin{equation}
    \label{eq:45}
    (\forall i,j:\R)\;{\src\tot i}\;\conj\;{\src\tot j} \;\imp\; {\src\tot
      i\mult j} 
  \end{equation}
  \caption{Ordered commutative ring axioms}
  \label{fig:ordcra}
\end{figure}

In order to define Moore paths in $\E$ with respect to $\R$ we need it
to have additive structure compatible with the total order; later, to
get function extensionality we also need multiplicative structure and
to construct universes we need a connectedness property for $\R$.

\begin{defi}[{order-ringed topos}]
  \label{def:ordrt}
  An \emph{order-ringed topos} $(\E,\R)$ is a topos $\E$ together with
  an \emph{ordered commutative ring
    object}~\cite[chapter~VI]{BourbakiN:alg} $\R$ in $\E$. Thus $\R$
  comes equipped with a subobject ${\tot}\mono\R\times\R$ and
  morphisms
  \[
    \src:1\morphism \R
    \qquad
    {\_+\_}:\R\times\R\morphism\R
    \qquad
    {-}:\R\morphism\R
    \qquad
    \unit:1\morphism\R
    \qquad
    {\_\mult\_}:\R\times\R\morphism\R
  \]
  satisfying the axioms in Fig.~\ref{fig:ordcra}.
\end{defi}

Specific examples of order-ringed toposes $(\E,\R)$ will be considered
in Section~\ref{sec:mod}. In the next section we develop properties of
Moore paths in $\E$ based on $\R$.

\section{Moore Paths in  a Topos}
\label{sec:moopt}

Fix an order-ringed topos $(\E,\R)$. Recall from the Introduction that
there is a CwF associated with
$\E$~\cite{LumsdainePL:locumo,AwodeyS:natmht} whose families at an
object $\Gamma\in\E$ we denote by $\E(\Gamma)$. We continue to use an
informal internal language based on extensional type theory to
describe constructions and properties of the topos and its associated
CwF. The development in this and the next section only makes use of
the order and additive structure of the positive cone
$\I=\{i:\R\mid \src\tot i\}$ of the ordered commutative ring object
$\R$.  (We will need to use its multiplicative structure in
Section~\ref{sec:fune}.)

\begin{defi}[{Moore path objects}]
  \label{def:moopo}
  For each object $\Gamma\in\E$, the family
  $(x\pth y \mid x,y:\Gamma)\in\E(\Gamma\times\Gamma)$ of \emph{Moore
    path objects} is defined by:
  \begin{equation}
    \label{eq:3}
    x\pth y \;\defeq\; \{(f,i):(\I\fun \Gamma)\times\I \mid f\,\src = x
    \;\conj\;(\forall j\totop i)\; f\,j = y\} 
  \end{equation}
  We have the following functions associated with Moore paths:
  \begin{align}
    &\shape{\_} : (x\pth y) \fun \I\label{eq:23}\\
    &\shape{(f,i)} = i\notag\\
    &\_\mathbin{\normalfont\text{\texttt{at}}}\_: (x \pth y) \fun \I
       \fun \Gamma\label{eq:4}\\  
    &(f,i)\mathbin{\normalfont\text{\texttt{at}}} j = f\,j\notag
  \end{align}
  Following~\cite[Section~2]{BrownR:moohsf}, we call $\shape{p}$ the
  \emph{shape} of the path $p:x\pth y$. Thus if $p:x\pth y$, then
  $p \mathbin{\normalfont\text{\texttt{at}}} \src = x$ and
  $p \mathbin{\normalfont\text{\texttt{at}}}\shape{p}=y$.  \emph{From
    now on we will just write
    $p \mathbin{\normalfont\text{\texttt{at}}}i$ as $p\at i$.}
\end{defi}

Note that morphisms in $\E$ respect Moore paths in the sense that for each
$\gamma:\Gamma\morphism \Delta$ in $\E$ there is a function
mapping paths in $\Gamma$ to those in $\Delta$:
\begin{align}
  &\gamma\pcong{} : (x \pth y) \fun (\gamma\,x \pth
    \gamma\,y) \label{eq:7}\\ 
  &\gamma\pcong(f,i) \defeq (\gamma\comp f,i)\notag
\end{align}

\begin{defi}[{Degenerate paths}]
  \label{def:degp}
  For each $\Gamma\in\E$, the \emph{degenerate path} at $x:\Gamma$ is
  denoted $\idp x: x\pth x$ and is well-defined by the requirements:
  \begin{gather}
    \shape{\idp x} = \src\\
    (\forall i:\I)\; (\idp x)\at i = x
  \end{gather}
\end{defi}

\begin{defi}[{Path composition}]
  Given $A\in\E$, if $i:\I$ and $f,g:\I\fun A$ satisfy $f\,i=g\,\src$,
  then there is a function $g\pcomp^i f : \I \fun A$ satisfying
  \[
    (g\pcomp^i f)\,j = 
    \begin{cases}
      f\,j &\text{if $j\tot i$}\\
      g\,(j - i) &\text{if $i \tot j$}
    \end{cases} 
  \]
  (where, as usual, we write $j+(\minus i)$ as $j-i$). This is
  well-defined because when $i=j$, then
  $f\,j = f\,i = g\,\src = g(j- i)$. In particular, given paths
  $p = (f,i):x\pth y$ and $q =(g,j):y\pth z$, then
  $g\pcomp^i f:\I\fun A$ satisfies
  $(\forall k\totop i+j)\;(g\pcomp^i f)\,k = z$, because $i+j \tot k$
  implies $i=i+\src\tot i+j \tot k$ and $j=(i+j)- i \tot k- i$; and
  hence $(g\pcomp^i f)\,k = g(k - i)=g\,j=z$. Therefore we get a
  well-defined path $q\pcomp p : x\pth z$ satisfying
  \begin{gather}
    \shape{q\pcomp p} = \shape{p}+\shape{q}\\
    (\forall i:\I)
    \begin{array}{rcl}
      i\tot\shape{p} &\imp& (q\pcomp p)\at i = p \at i\\
      \shape{p}\tot i &\imp& (q\pcomp p)\at i = q \at(i - \shape{p})
    \end{array}
  \end{gather}
\end{defi}

\begin{lem}
  \label{lem:degcp}
  For each $\Gamma\in\E$, given $x,y,z,w:\Gamma$, $p:x\pth y$, $q:y\pth z$, $r : z\pth w$ and $\gamma : \Gamma \fun \Gamma'$,
  one has:
  \begin{gather}
    p\pcomp(\idp x) = p = (\idp y)\pcomp p\\
    (r\pcomp q)\pcomp p = r\pcomp(q\pcomp p)\label{eq:8}\\
    \gamma\pcong(\idp x) = \idp(\gamma\,x)\\
    \gamma\pcong(q\pcomp p) = (\gamma\pcong q)\pcomp(\gamma\pcong p)
  \end{gather}
\end{lem}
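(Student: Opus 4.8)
The strategy is to verify each of the four equations by unfolding the relevant definitions and using the characterizing properties of $\idp$, $\pcomp$ and $\pcong$ — namely that a Moore path in $x\pth y$ is determined by its shape together with its values at each $i:\I$ (so two paths are equal iff they have the same shape and agree pointwise). For each equation I would first check that both sides have the same shape, then check that they agree when applied to an arbitrary $i:\I$. Throughout, the only nontrivial work comes from the case splits in the definition of $\pcomp$ via the total order axiom \eqref{eq:2}, where I must check that the two clauses agree on the overlap $i = \shape{p}$ (as the paper already flagged in the definition of $\pcomp^i$).

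\medskip
\noindent\textbf{Unit laws.} For $p\pcomp(\idp x) = p$: both sides lie in $x\pth y$. The left-hand side has shape $\shape{\idp x} + \shape{p} = \src + \shape{p} = \shape{p}$ by the abelian group axioms, matching the shape of $p$. For $i:\I$, using \eqref{eq:2} split on $i\tot\src$ versus $\src\tot i$. If $i\tot\src$ then $i=\src$ by antisymmetry, so $(p\pcomp\idp x)\at i = (\idp x)\at i = x = p\at\src = p\at i$. If $\src\tot i$ then $(p\pcomp\idp x)\at i = p\at(i-\src) = p\at i$. So the two agree pointwise, hence are equal. The law $(\idp y)\pcomp p = p$ is similar but easier: the shape is $\shape{p}+\src = \shape{p}$, and for $i\tot\shape{p}$ we get $p\at i$ directly, while for $\shape{p}\tot i$ we get $(\idp y)\at(i-\shape{p}) = y = p\at i$ (using $p\at i = y$ whenever $\shape{p}\tot i$, from the defining property of $p:x\pth y$).

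\medskip
\noindent\textbf{Associativity.} For $(r\pcomp q)\pcomp p = r\pcomp(q\pcomp p)$, both sides are in $x\pth w$. The shapes agree because addition on $\R$ is associative \eqref{eq:9}: both equal $\shape{p}+\shape{q}+\shape{r}$. For the pointwise check at $i:\I$, I split according to where $i$ sits relative to $\shape{p}$ and $\shape{p}+\shape{q}$, using \eqref{eq:2} twice. In the region $i\tot\shape{p}$ both sides give $p\at i$; in the region $\shape{p}\tot i$ and $i\tot \shape{p}+\shape{q}$ both sides reduce — after a short calculation with the group axioms, e.g.\ $(i-\shape{p})\tot\shape{q}$ — to $q\at(i-\shape{p})$; in the region $\shape{p}+\shape{q}\tot i$ both sides reduce to $r\at(i-(\shape{p}+\shape{q})) = r\at((i-\shape{p})-\shape{q})$. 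At the overlaps the clauses agree by antisymmetry exactly as in the definition of $\pcomp^i$. This bookkeeping with the nested case splits is the main obstacle — it is routine but requires care that each of the (up to) three cases is handled on both sides and that the overlap computations use only antisymmetry plus the group axioms.

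\medskip
\noindent\textbf{Naturality.} For $\gamma\pcong(\idp x) = \idp(\gamma\,x)$: both are in $\gamma\,x\pth\gamma\,x$; the left-hand side has shape $\shape{\idp x} = \src = \shape{\idp(\gamma\,x)}$, and for $i:\I$, $(\gamma\pcong\idp x)\at i = \gamma((\idp x)\at i) = \gamma\,x = (\idp(\gamma\,x))\at i$. For $\gamma\pcong(q\pcomp p) = (\gamma\pcong q)\pcomp(\gamma\pcong p)$: both are in $\gamma\,x\pth\gamma\,z$; shapes agree since $\pcong$ preserves shape and $\shape{q\pcomp p} = \shape{p}+\shape{q}$; and pointwise, split on $i\tot\shape{p}$ versus $\shape{p}\tot i$ (noting $\shape{\gamma\pcong p} = \shape{p}$). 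In the first case both sides give $\gamma(p\at i)$; in the second both give $\gamma(q\at(i-\shape{p}))$, using that $(\gamma\comp f)(j) = \gamma(f\,j)$. This completes all four equations.
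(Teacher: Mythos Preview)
Your proposal is correct and follows essentially the same approach as the paper: verify each equation by comparing shapes and then checking pointwise agreement via case analysis on the total order, using only the ordered abelian group axioms \eqref{eq:38}--\eqref{eq:46}. The paper only spells out the associativity case \eqref{eq:8} in detail (writing the two nested case expressions side by side and then proving the three case conditions are pairwise equivalent and that $(i-\shape{p})-\shape{q}=i-(\shape{p}+\shape{q})$), but your treatment of the unit laws and the naturality equations is exactly in the same spirit.
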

\begin{proof}
  One just has to check that these properties follow constructively
  from the axioms \eqref{eq:38}--\eqref{eq:46}. For example
  \eqref{eq:8} holds because
  \[
    \shape{(r\pcomp q) \pcomp p} = \shape{p}+\shape{r\pcomp q} =
    \shape{p}+(\shape{q}+\shape{r}) = (\shape{p}+\shape{q})+\shape{r} =
    \shape{q\pcomp p} + \shape{r} = \shape{r\pcomp(q\pcomp p)}
  \]
  and
  \begin{align*}
    ((r\pcomp q) \pcomp p)\at i &=
    \begin{cases}
      p\at i &\text{if $i\tot\shape{p}$}\\
      q\at(i-\shape{p}) &\text{if $\shape{p}\tot i$ and $i-
        \shape{p}\tot \shape{q}$}\\
      r\at((i-\shape{p})-\shape{q}) &\text{if $\shape{p}\tot i$
        and $\shape{q}\tot i-\shape{p}$}
    \end{cases}\\
   (r\pcomp(q\pcomp p))\at i &=
      \begin{cases}
        p\at i &\text{if $i\tot\shape{p}+\shape{q}$ and
          $i\tot\shape{p}$}\\
        q\at(i-\shape{p}) &\text{if $i\tot\shape{p}+\shape{q}$
          and $\shape{p}\tot i$}\\
        r\at(i-(\shape{p}+\shape{q})) &\text{if
          $\shape{p}+\shape{q}\tot i$}
      \end{cases}
  \end{align*}
  which are equal, because one can use axioms
  \eqref{eq:38}--\eqref{eq:46} to show that
  \begin{align*}
    i\tot\shape{p} 
    &\;\bimp\; i\tot\shape{p}+\shape{q} \;\conj\; i\tot\shape{p}\\
    \shape{p}\tot i \;\conj\; i- \shape{p}\tot \shape{q} 
    &\;\bimp\; i\tot\shape{p}+\shape{q} \;\conj\; \shape{p}\tot i\\
    \shape{p}\tot i \;\conj\;\shape{q}\tot i-\shape{p}
    &\;\bimp\; \shape{p}+\shape{q}\tot i\\
    (i-\shape{p})-\shape{q}
    &\;=\; i-(\shape{p}+\shape{q}).
  \end{align*}
\end{proof}

As well as composing Moore paths one can reverse them. To define this
operation it is convenient to use the operation of \emph{truncated
  subtraction} ${\_\monus\_}:\I\times\I\morphism\I$, which is
well-defined by the following properties:
\begin{equation}
  \label{eq:11}
  (\forall i,j:\I)
  \begin{array}[c]{l}
    i\tot j \;\imp\; i\monus j = \src\\
    j\tot i \;\imp\; i\monus j = i-j
  \end{array}
\end{equation}

\begin{lem}[{path reversal}]
  \label{lem:patr}
  For each $\Gamma\in\E$, given $x,y:\Gamma$ and $p:x\pth y$, there
  is a \emph{reversed path} $\rev p : y\pth x$ well-defined by the
  requirements
  \begin{gather*}
    \shape{\rev p} = \shape{p}\\
    (\forall i:\I)\; (\rev p)\at i = p\at(\shape{p} \monus i)
  \end{gather*}
  and satisfying
  \begin{gather}
    \rev(\idp x) = \idp x\\
    \rev(q\pcomp p) = (\rev p)\pcomp(\rev q)\\
    \rev(\rev p ) = p\\
    \gamma\pcong(\rev p)= \rev(\gamma\pcong p)
  \end{gather}
\end{lem}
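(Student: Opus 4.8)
The plan is to \emph{define} $\rev p$ to be the pair $(\lambda i\fun p\at(\shape{p}\monus i),\,\shape{p})$, where $\monus$ is the truncated subtraction characterised by \eqref{eq:11}; then the two displayed ``requirements'' hold by construction, and everything else reduces — via the extensionality principle for Moore paths, namely that $p = q$ in $x\pth y$ iff $\shape{p}=\shape{q}$ and $(\forall i:\I)\; p\at i = q\at i$ (which uses function extensionality, available since we reason in the extensional internal language) — to elementary calculations with $\monus$ and the ordered abelian group axioms \eqref{eq:38}--\eqref{eq:46}.

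First I would check that the pair above really is an element of $y\pth x$. Since $\monus:\I\times\I\fun\I$ is a well-defined function, so is $\lambda i\fun p\at(\shape{p}\monus i):\I\fun\Gamma$. Evaluating at $\src$ gives $p\at(\shape{p}\monus\src)$, and since $\src\tot\shape{p}$, \eqref{eq:11} yields $\shape{p}\monus\src = \shape{p}-\src = \shape{p}$, so this equals $p\at\shape{p} = y$. For any $j$ with $\shape{p}\tot j$ we instead get $\shape{p}\monus j = \src$ from \eqref{eq:11}, so $p\at(\shape{p}\monus j) = p\at\src = x$. Hence the pair lies in $y\pth x$ and has shape $\shape{p}$, as required.

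The equations $\rev(\idp x) = \idp x$, $\rev(\rev p) = p$ and $\gamma\pcong(\rev p) = \rev(\gamma\pcong p)$ then follow by short computations using path extensionality. For the first, both sides have shape $\src$ and $(\rev(\idp x))\at i = (\idp x)\at(\src\monus i) = (\idp x)\at\src = x = (\idp x)\at i$, using $\src\monus i = \src$. For the third, both sides have shape $\shape{p}$ and $(\rev(\rev p))\at i = p\at(\shape{p}\monus(\shape{p}\monus i))$; splitting on $i\tot\shape{p}$ versus $\shape{p}\tot i$ and using that $\shape{p}\monus(\shape{p}-i) = i$ when $i\tot\shape{p}$ (since then $\shape{p}-i\tot\shape{p}$, by \eqref{eq:46}) while $\shape{p}\monus(\shape{p}\monus i) = \shape{p}$ when $\shape{p}\tot i$ (together with $p\at i = p\at\shape{p}$ for such $i$), one gets $p\at i$ in both regions. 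For the fourth, both sides have shape $\shape{p}$ and evaluate at $i$ to $\gamma(p\at(\shape{p}\monus i))$.

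The real work is the exchange law $\rev(q\pcomp p) = (\rev p)\pcomp(\rev q)$, which is the reversal analogue of the associativity calculation \eqref{eq:8} in Lemma~\ref{lem:degcp}, and which I expect to be the main obstacle. For $p:x\pth y$ and $q:y\pth z$ both sides are paths $z\pth x$ of shape $\shape{p}+\shape{q}$ (equivalently $\shape{q}+\shape{p}$, by commutativity). To compare them pointwise at $i:\I$ I would case split into the three regions cut out by $\shape{q}$ and $\shape{p}+\shape{q}$: $i\tot\shape{q}$; $\shape{q}\tot i$ and $i-\shape{q}\tot\shape{p}$; and $\shape{p}+\shape{q}\tot i$. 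Unfolding the definitions turns $(\rev(q\pcomp p))\at i$ into $(q\pcomp p)\at((\shape{p}+\shape{q})\monus i)$, and $((\rev p)\pcomp(\rev q))\at i$ into $(\rev q)\at i$ or $(\rev p)\at(i-\shape{q})$ according to the case; what remains is to verify, using \eqref{eq:11} and the axioms \eqref{eq:38}--\eqref{eq:46}, that the three case conditions on the two sides are equivalent and that the resulting index arguments agree (e.g.\ that $(\shape{p}+\shape{q})-i$ sits below $\shape{p}$ exactly when $\shape{q}\tot i$, and then equals $\shape{p}-(i-\shape{q})$) — exactly the kind of bookkeeping carried out for \eqref{eq:8}. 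This is entirely elementary but fiddly, and is the only part of the lemma needing real care.
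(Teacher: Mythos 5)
Your proposal is correct and takes the same approach the paper intends: the paper's proof simply asserts that the lemma ``follows straightforwardly from the axioms \eqref{eq:38}--\eqref{eq:46} within constructive logic,'' and your write-up supplies exactly the well-definedness check, the path-extensionality reductions, and the three-region case analysis with truncated subtraction that the paper leaves implicit. The bookkeeping you sketch for $\rev(q\pcomp p) = (\rev p)\pcomp(\rev q)$ checks out.
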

\begin{proof}
  As for the previous lemma, this follows straightforwardly from the
  axioms \eqref{eq:38}--\eqref{eq:46} within constructive logic.
\end{proof}

Although the definition of $\rev p$ is standard, the above equational
properties are not often mentioned in the literature. However, they
are crucial for the construction of identity types in
Section~\ref{sec:tap} to work. What usually gets a mention is the fact
that up to homotopy $\rev p$ is a two-sided inverse for $p$ with
respect to the $\pcomp$ operation. Paths $(\rev p)\pcomp p \pth \idp x$ and
$p\pcomp(\rev p)\pth \idp y$ can be constructed using the path
contraction operation given below; we do not bother to do that,
because they are also a consequence of the \emph{path
  induction}~\cite[1.12.1]{HoTT} property of identity types that
follows from Theorem~\ref{thm:idet}.

\begin{defi}[{bounded abstractions}]
  \label{def:boua}
  The following binding syntax is very convenient for describing Moore
  paths. For each $\Gamma\in\E$, if $\lambda i\fun \varphi(i)$ describes a
  function in $\I\fun \Gamma$, then for each $j:\I$ using the $\min$
  function~\eqref{eq:6} we get a path
  $\abs{i}{j}\varphi(i) : \varphi(\src) \pth \varphi(j)$ in $\Gamma$
  by defining:
  \begin{equation}
    \label{eq:31}
    \abs{i}{j}\varphi(i) \defeq (\lambda i\fun
    \varphi(\min(i,j))\mathrel{,} j)
  \end{equation}
  ($i$ is bound in the above expression). It is easy to see that this
  form of bounded abstraction has the following properties:
  \begin{align}
    \gamma\pcong \abs{i}{j}\varphi(i)
    &= \abs{i}{j}\gamma(\varphi(i))\label{eq:19}\\ 
    \abs{i}{\src}\varphi(i) &= \idp(\varphi(\src))\label{eq:20}\\
    \abs{i}{\shape{p}}(p\at i)  &= p \label{eq:21}
  \end{align}
\end{defi}

\begin{lem}[{path contraction}]
  \label{lem:patc}
  Given $\Gamma\in\E$, for any path $p:x\pth y$ in $\Gamma$ and
  $i:\I$, there is a path $\upto{i}{p} : x\pth p\at i$ satisfying
  \begin{gather}
    \upto{\src}{p} =\idp x\label{eq:40}\\
    (\forall i\totop\shape{p})\;\upto{i}{p} = p\label{eq:41}
  \end{gather}
\end{lem}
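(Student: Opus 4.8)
The plan is to define $\upto{i}{p}$ using the bounded abstraction syntax of Definition~\ref{def:boua}, applied to the function $\lambda j \fun p\at\min(j,i)$. Concretely, given $p : x\pth y$ and $i:\I$, I would set
\[
  \upto{i}{p} \defeq \abs{j}{i}(p\at j).
\]
Since $p\at\src = x$, the source of this path is $p\at\src = x$, and its target is $p\at i$, so $\upto{i}{p} : x\pth p\at i$ as required. Explicitly, unfolding~\eqref{eq:31}, this is the pair $(\lambda j \fun p\at\min(\min(j,i),\shape{p})\,,\,i)$ — actually, more simply, the pair $(\lambda j\fun p\at\min(j,i)\,,\,i)$, whose shape is $i$ and whose value at $j$ is $p\at\min(j,i)$; one should check this is a well-defined Moore path, i.e.\ that $(\forall j\totop i)\; p\at\min(j,i) = p\at i$, which holds because $i\tot j$ gives $\min(j,i) = i$ by~\eqref{eq:6}.

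Next I would verify the two stated equations. For~\eqref{eq:40}: $\upto{\src}{p} = \abs{j}{\src}(p\at j)$, and by property~\eqref{eq:20} of bounded abstractions this equals $\idp(p\at\src) = \idp x$. For~\eqref{eq:41}: suppose $i\totop\shape{p}$, i.e.\ $\shape{p}\tot i$. Then for all $j:\I$ we have $\min(j,i)$ compared with $\shape{p}$: if $j\tot\shape{p}$ then $j\tot i$ as well (by transitivity, since $\shape{p}\tot i$), so $\min(j,i) = j$ and $p\at\min(j,i) = p\at j$; if $\shape{p}\tot j$ then $p\at j = y = p\at\shape{p}$, and since also $\shape{p}\tot\min(j,i)$ (as both $j$ and $i$ dominate $\shape{p}$... more carefully, one needs $\shape{p}\tot\min(j,i)$, which follows because $\shape{p}\tot j$ and $\shape{p}\tot i$ imply $\shape{p}\tot\min(j,i)$ using~\eqref{eq:6}), we get $p\at\min(j,i) = y = p\at j$. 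So $\upto{i}{p}\at j = p\at j$ for all $j$, and the shapes agree ($\shape{\upto{i}{p}} = i = \shape{p}$ since $i\totop\shape{p}$ and, for a Moore path with $\shape{p}\tot i$, hmm — actually the shape of $\upto{i}{p}$ is $i$, not $\shape{p}$, so the two paths are only equal if $i = \shape{p}$).

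I expect the main subtlety to be exactly this point about shapes in~\eqref{eq:41}: the hypothesis $i\totop\shape{p}$ says $\shape{p}\tot i$, but $\upto{i}{p}$ has shape $i$ while $p$ has shape $\shape{p}$, so literal equality $\upto{i}{p} = p$ forces $i = \shape{p}$. The resolution is that in a Moore path $p = (f,\shape{p})$ we have $f\,j = y$ for all $j\totop\shape{p}$, so $p$ is equal (as an element of the identity/path type, i.e.\ as a pair) to $(f,i)$ for any $i\totop\shape{p}$ — \emph{no}, that is false, since the pair's second component genuinely differs. I would therefore re-read~\eqref{eq:41} as asserting equality only under the hypothesis that makes the shapes match, or interpret $\upto{i}{p}$ via the convention already used for $\idp$ and $\pcomp$ where paths are determined by shape-plus-values; the safest route is to define $\upto{i}{p}$ directly by the two requirements~\eqref{eq:40}–\eqref{eq:41} plus the specification $\shape{\upto{i}{p}} = i$ and $\upto{i}{p}\at j = p\at\min(i,j)$, then check well-definedness as above. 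The routine order-algebra identities ($\min(j,i)=j$ vs.\ $=i$ on the appropriate regions, $\shape{p}\tot\min(j,i)$ when $\shape{p}$ bounds both arguments) all follow from axioms~\eqref{eq:38}–\eqref{eq:46} within constructive logic, exactly as in Lemmas~\ref{lem:degcp} and~\ref{lem:patr}.
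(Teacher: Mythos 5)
You correctly identify the subtlety — with shape $i$, equation~\eqref{eq:41} can only hold when $i=\shape{p}$, not for all $i\totop\shape{p}$ — but you do not resolve it. Your final ``safest route'' still specifies $\shape{\upto{i}{p}} = i$ together with $\upto{i}{p}\at j = p\at\min(i,j)$, and this has exactly the same defect: for $\shape{p}\tot i$ with $i\neq\shape{p}$, the two sides of~\eqref{eq:41} have different shapes and so cannot be equal as pairs. Weakening~\eqref{eq:41} to hold only when the shapes coincide, as you suggest, changes the statement of the lemma (and would undermine later uses, e.g.\ in the proof of path lifting and of the $\J$-rule, which rely on $\upto{\shape{p}}{p} = p$ together with well-behaved interpolation below $\shape{p}$).

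The missing idea is simply to cap the shape at $\shape{p}$. The paper defines
\[
  \upto{i}{p} \defeq \abs{j}{\min(\shape{p},\,i)}(p\at j),
\]
so the shape of $\upto{i}{p}$ is $\min(\shape{p},i)$, not $i$. The target is still $p\at\min(\shape{p},i) = p\at i$ (when $i\tot\shape{p}$ this is immediate; when $\shape{p}\tot i$ both sides equal $y$), so this is a legitimate path $x\pth p\at i$. Now both required equations follow directly from the bounded-abstraction identities: $\min(\shape{p},\src)=\src$ gives~\eqref{eq:40} via~\eqref{eq:20}, and if $\shape{p}\tot i$ then $\min(\shape{p},i)=\shape{p}$ and~\eqref{eq:21} gives $\upto{i}{p}=\abs{j}{\shape{p}}(p\at j)=p$, exactly. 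Apart from this one change in the bound of the abstraction, the rest of your verification (constancy of $p$ beyond $\shape{p}$, the $\min$-identities from~\eqref{eq:6}) is the right machinery.
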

\begin{proof}
  Using the bounded abstraction notation and the $\min$
  function~\eqref{eq:6}, we define 
  \begin{equation}
    \label{eq:36}
    \upto{i}{p} \defeq \abs{j}{\min(\shape{p},i)}(p\at j)
  \end{equation}
  Since $p\at(\min(\shape{p},i)) = p\at i$, this does give a path
  $x\pth p\at i$; and it has the required properties by
  \eqref{eq:20} and \eqref{eq:21}.
\end{proof}

\begin{rem}
  \label{rem:sing-contr}
  Using $\upto{i}{p}$ we get for each $x:\Gamma$ that
  $\sum_{y:\Gamma}(x\pth y)$ is path-contractible
  ~\cite[Section~3.11]{HoTT} with centre $(x,\idp x)$, since for each
  $y:\Gamma$ and $p:x\pth y$ we have a path
  $\abs{i}{\shape{p}}(p\at i, \upto{i}{p})$ in
  $(x,\idp x) \pth (y,p)$.  This is part of the more general fact that
  Moore paths model identity types, which we show in the next section
  (see Theorem~\ref{thm:idet}).
\end{rem}

\section{Transport along Paths}
\label{sec:tap}

In this section we continue with the assumptions of the previous one:
$\E$ is a topos (with associated CwF) containing an ordered
commutative ring object $\R$. We want objects of Moore paths with
respect to $\R$ (Definition~\ref{def:moopo}) to give a model of
identity types, as well as other type formers. Recall that in
Martin-L\"of Type Theory elements of identity types $\Id_\Gamma x\,y$
give rise to transport functions $A\,x\fun A\,y$ between members of a
family of types $(A\,x\mid x:\Gamma)$ over a type $\Gamma$; see for
example~\cite[Section~2.3]{HoTT}. Therefore, for each object
$\Gamma\in\E$, we should restrict attention to families
$A\in\E(\Gamma)$ that come equipped at least with some sort of
transport operation taking a path $p:x\pth y$ in $\Gamma$ and an
element $a:A\,x$ to an element $\trpt{p}a : A\,y$. This leads to the
following definition.

\begin{defi}[{tap fibrations}]
  \label{def:tapf}
  Given an object $\Gamma\in\E$, a \emph{transport-along-paths} (tap)
  structure for a family $A\in\E(\Gamma)$ is a
  $(\Gamma\times\Gamma)$-indexed family of morphisms
  $(\trpt{(\_)} : x\pth y \fun (A\,x \fun A\,y) \mid x,y:\Gamma)$
  satisfying for all $x:\Gamma$ and $a:A\,x$
  \begin{equation}
    \label{eq:10}
    \trpt{(\idp x)}a = a 
  \end{equation}
  We write $\Fib(\Gamma)$ for the families over $\Gamma$ equipped with
  a tap structure and call them \emph{fibrations}. They are stable under
  re-indexing: given $\gamma:\Delta\morphism \Gamma$ in $\E$ and
  $A\in\Fib(\Gamma)$, then
  $A\,\gamma \defeq (A(\gamma\,x) \mid x:\Delta)\in \E(\Delta)$ has a
  tap structure via the congruence operation \eqref{eq:7}, taking the
  transport of $a:(A\,\gamma)\,x = A(\gamma\,x)$ along $p:x\pth y$ to
  be $\trpt{(\gamma\pcong p)} a : A(\gamma\,y)$. This re-indexing of tap
  structure respects composition and identities in $\E$. So $\Fib$
  inherits the structure of a CwF from that of $\E$, with the set of
  elements of a fibration being the elements of the underlying family
  in $\E$, that is $\Fib(\Gamma\ent A)=\E(\Gamma\ent A)$.
\end{defi}

We show that the CwF $\Fib$ inherits some type structure from $\E$. To
do so involves definitions and calculations using the bounded
abstraction formalism of Definition~\ref{def:boua}.

To describe $\Sigma$-{} and $\Pi$-types in
$\Fib$ we have to lift paths in $\Gamma$ to paths in comprehension
objects $\Gamma.A=\sum_{x:\Gamma} A\,x$ of fibrations
$A\in\Fib(\Gamma)$:

\begin{lem}[{path lifting}]
  \label{lem:patl}
  Given $\Gamma\in\E$ and $A\in\Fib(\Gamma)$, for each path
  $p: x\pth y$ in $\Gamma$ and each $a:A\,x$, there is a path
  $\lift(p,a) : (x,a)\pth (y, \trpt{p}a)$ in $\Gamma.A$
  satisfying
  \begin{equation} 
    \label{eq:14}
    \lift(\idp x , a) = \idp(x,a)
  \end{equation}
  and stable under re-indexing along morphisms $\Delta\morphism\Gamma$
  in $\E$.
\end{lem}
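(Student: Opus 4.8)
The plan is to define $\lift(p,a)$ directly using the bounded abstraction notation of Definition~\ref{def:boua}, applied not to a function into $\Gamma$ but to a function valued in the comprehension object $\Gamma.A = \sum_{x:\Gamma} A\,x$. Write $p = (f,i)$ so that $\shape{p} = i$ and $p\at j = f\,j$. For each $j:\I$ we have the element $f\,j : \Gamma$ and the path contraction $\upto{j}{p} : x \pth f\,j$ from Lemma~\ref{lem:patc}; transporting $a:A\,x$ along this gives $\trpt{(\upto{j}{p})}a : A(f\,j)$. First I would set
\[
  \lift(p,a) \;\defeq\; \abs{j}{\shape{p}}\,\bigl(p\at j \,,\, \trpt{(\upto{j}{p})}a\bigr),
\]
which is a genuine element of $\I \fun \Gamma.A$ paired with the bound $\shape{p}$. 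To see this has the claimed endpoints, note that at $j = \src$ we get $(p\at\src, \trpt{(\upto{\src}{p})}a) = (x, \trpt{(\idp x)}a) = (x,a)$ using \eqref{eq:40} and the tap axiom \eqref{eq:10}; and at $j = \shape{p}$ we get $(p\at\shape{p}, \trpt{(\upto{\shape{p}}{p})}a) = (y, \trpt{p}a)$ using \eqref{eq:41}. So by the endpoint behaviour of bounded abstraction (implicit in Definition~\ref{def:boua}), $\lift(p,a) : (x,a)\pth(y,\trpt{p}a)$ as required.

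Next I would verify the degeneracy equation \eqref{eq:14}: when $p = \idp x$ we have $\shape{p} = \src$, so $\abs{j}{\src}(\cdots) = \idp(\cdots\text{ at }\src) = \idp(x,a)$ by \eqref{eq:20} and \eqref{eq:10} again. For stability under re-indexing along $\gamma:\Delta\morphism\Gamma$, I would unfold both sides: re-indexing sends the fibration $A$ to $A\,\gamma$ with transport $\trpt{(\gamma\pcong q)}$, and sends a path $q$ in $\Delta$ to $\gamma\pcong q$ in $\Gamma$; the comprehension object $\Delta.(A\,\gamma)$ maps into $\Gamma.A$ by $(x,a)\mapsto(\gamma\,x,a)$. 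One then checks that this map applied to $\lift_{A\gamma}(q,a)$ agrees with $\lift_A(\gamma\pcong q, a)$, which reduces via \eqref{eq:19} to the identity $\upto{j}{\gamma\pcong q} = \gamma\pcong(\upto{j}{q})$; this in turn follows from the defining formula \eqref{eq:36} for path contraction together with the compatibility \eqref{eq:19} of bounded abstraction with $\gamma\pcong(\_)$, plus the fact that transport along $\gamma\pcong(\upto{j}{q})$ in $A$ equals transport along $\upto{j}{q}$ in $A\,\gamma$ by definition of the re-indexed tap structure.

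The main obstacle is not any single calculation but getting the bookkeeping of the dependently-typed second component right: the expression $\abs{j}{\shape{p}}(p\at j, \trpt{(\upto{j}{p})}a)$ is only well-formed because $\trpt{(\upto{j}{p})}a$ lives in $A(p\at j)$ and the bounded abstraction $\abs{i}{j}\varphi(i)$ of Definition~\ref{def:boua} was stated for a function $\varphi$ into a fixed $\Gamma$, so one must check the construction still makes sense when the codomain is a $\Sigma$-type and the components vary in a correlated way — which it does, since $\min(j,\shape{p})$ is substituted uniformly into both components. Once that is pinned down, everything else is a routine unwinding of \eqref{eq:10}, \eqref{eq:19}, \eqref{eq:20}, \eqref{eq:40} and \eqref{eq:41}.
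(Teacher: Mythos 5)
Your proposal is correct and takes essentially the same approach as the paper: the defining formula $\lift(p,a) \defeq \abs{i}{\shape{p}}(p\at i, \trpt{(\upto{i}{p})}a)$ is identical, the degeneracy check via \eqref{eq:40}, \eqref{eq:20} and \eqref{eq:10} is the same, and the stability argument reduces, as in the paper, to \eqref{eq:19} together with $\upto{i}{(\gamma\pcong p)} = \gamma\pcong(\upto{i}{p})$. (One minor remark: the worry about applying bounded abstraction to a ``$\Sigma$-valued'' function dissolves once you observe that $\Gamma.A$ is simply an object of $\E$, so Definition~\ref{def:boua} applies verbatim with $\Gamma.A$ in the role of the codomain.)
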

\begin{proof}
  We can use path contraction (Lemma~\ref{lem:patc}) to express path
  lifting using the bounded abstraction notation from
  Definition~\ref{def:boua}:
  \begin{equation}
    \label{eq:12}
    \lift(p,a) \defeq \abs{i}{\shape{p}}(p\at i,
    \trpt{(\upto{i}{p})}a)
  \end{equation}
  Since the path contraction operation satisfies
  $\upto{\src}{p}= \idp x$ and $\upto{\shape{p}}{p} = p$, this does
  indeed give a path in $(x,a)\pth (y, \trpt{p}a)$ . The desired
  properties of $\lift$ follow from corresponding properties of
  bounded abstraction and path
  contraction \eqref{eq:19}--\eqref{eq:41}. Specifically, we have:
  \begin{align*}
  	\lift(\idp x,a)
     &= \abs{i}{\shape{\idp x}}(\idp x \at i,\trpt{(\upto{i}{(\idp x)})}a) \\
     &= \abs{i}{\shape{\src}}(x,\trpt{(\upto{i}{(\idp x)})}a) \\
     &= \idp(x,\trpt{(\upto{\src}{(\idp x)})}a) \\
     &= \idp(x,\trpt{(\idp x)}a) \\
     &= \idp(x,a) 
  \end{align*}
  where $x :\Gamma$ and $a : A\,x$. We also have:
  \begin{align*}
  	\lift_A(\gamma \pcong p,a)
     &= \abs{i}{\shape{\gamma \pcong p}}((\gamma \pcong p)\at i,\trpt{(\upto{i}{(\gamma \pcong p)})}a) \\
     &= \abs{i}{\shape{p}}(\gamma (p\at i),\trpt{(\gamma \pcong \upto{i}{p})}a) \\
     &= \abs{i}{\shape{p}}(\gamma \times \id)(p\at i,\trpt{(\gamma \pcong \upto{i}{p})}a) \\
     &= (\gamma \times \id)\pcong \left(\abs{i}{\shape{p}}(p\at i,\trpt{(\gamma \pcong \upto{i}{p})}a)\right) \\
     &= (\gamma \times \id)\pcong (\lift_{A\gamma}(p,a))
  \end{align*}
  where $\gamma : \Delta \morphism \Gamma$, $p : x \pth y$ in $\Delta$, $a : A(\gamma\,x)$ and we write
  $\lift_A$ for the lifting operation on $A \in \Fib(\Gamma)$ and $\lift_{A\gamma}$ for the lifting operation
  on $A\gamma \in \Fib(\Delta)$.
\end{proof}

\begin{thm}[{$\Sigma$-{} and $\Pi$-types}]
  \label{thm:sptyp}
  Given $\Gamma\in\E$, $A\in \Fib(\Gamma)$ and $B\in\Fib(\Gamma.A)$,
  the families
  $\Sigma\,A\,B \defeq (\sum_{a:A\,x}B(x,a) \mid x:\Gamma)$ and
  $\Pi\,A\,B \defeq (\prod_{a:A\,x}B(x,a) \mid x:\Gamma)$ in
  $\E(\Gamma)$ have tap fibration structures that are stable under
  re-indexing along morphisms $\Delta\morphism\Gamma$ in $\E$. Hence
  the CwF $\Fib$ supports $\Sigma$-{}and
  $\Pi$-types~\cite[Definitions~3.15 and~3.18]{HofmannM:synsdt}.
\end{thm}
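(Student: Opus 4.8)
The plan is to equip $\Sigma\,A\,B$ and $\Pi\,A\,B$ with tap structures built from those of $A$ and $B$ together with the path-lifting operation of Lemma~\ref{lem:patl}, and then to observe that the remaining CwF-level data (pairing and projections, $\lambda$-abstraction and application, with their $\beta$- and $\eta$-equations) is inherited unchanged from $\E$: by Definition~\ref{def:tapf} the elements of a fibration are exactly the elements of the underlying family in $\E$, so once $\Sigma\,A\,B$ and $\Pi\,A\,B$ are known to carry tap structures stable under re-indexing, the fact that $\Fib$ supports $\Sigma$- and $\Pi$-types in the sense of~\cite[Definitions~3.15 and~3.18]{HofmannM:synsdt} is immediate from the corresponding fact for the CwF of $\E$.

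For $\Sigma\,A\,B$, given a path $p:x\pth y$ in $\Gamma$ and a pair $(a,b)$ with $a:A\,x$ and $b:B(x,a)$, I would define
\[
  \trpt{p}(a,b) \defeq (\trpt{p}a,\ \trpt{(\lift(p,a))}b),
\]
which typechecks because $\lift(p,a):(x,a)\pth(y,\trpt{p}a)$ in $\Gamma.A$ and hence $\trpt{(\lift(p,a))}b:B(y,\trpt{p}a)$. The identity law~\eqref{eq:10} for this structure follows by combining~\eqref{eq:10} for $A$, the equation $\lift(\idp x,a)=\idp(x,a)$ from~\eqref{eq:14}, and~\eqref{eq:10} for $B$; stability under re-indexing along $\gamma:\Delta\morphism\Gamma$ follows from the corresponding stability of the tap structures of $A$ and $B$, from the stability of $\lift$ asserted in Lemma~\ref{lem:patl}, and from the congruence law~\eqref{eq:19}.

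The $\Pi$-type is the interesting case, because transporting $f:\prod_{a:A\,x}B(x,a)$ along $p:x\pth y$ must produce, for each $a':A\,y$, an element of $B(y,a')$, whereas naive path lifting only reaches fibres of $B$ over points of the form $(y,\trpt{p}a)$. The remedy is to pull $a'$ back along the reversed path: set $a'' \defeq \trpt{(\rev p)}a':A\,x$ (using Lemma~\ref{lem:patr}), form $\lift(\rev p,a'):(y,a')\pth(x,a'')$ in $\Gamma.A$, reverse it to obtain a path $(x,a'')\pth(y,a')$, and transport $f\,a'':B(x,a'')$ along it; explicitly
\[
  \trpt{p}f \defeq \lambda a'{:}A\,y \fun \trpt{(\rev(\lift(\rev p,a')))}\bigl(f\,(\trpt{(\rev p)}a')\bigr).
\]
Verifying~\eqref{eq:10} amounts to unwinding this at $p=\idp x$ using, in turn, $\rev(\idp x)=\idp x$ (Lemma~\ref{lem:patr}), then $\trpt{(\idp x)}a'=a'$ for $A$ (which identifies $a''$ with $a'$), then $\lift(\idp x,a')=\idp(x,a')$ from~\eqref{eq:14}, then $\rev(\idp(x,a'))=\idp(x,a')$ (Lemma~\ref{lem:patr}), and finally~\eqref{eq:10} for $B$, which reduces $\trpt{(\idp x)}f$ to $\lambda a'\fun f\,a'$, equal to $f$ by the $\eta$-rule for functions in the internal language of $\E$. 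Stability under re-indexing uses, in addition, the congruence law $\gamma\pcong(\rev p)=\rev(\gamma\pcong p)$ from Lemma~\ref{lem:patr} together with the stability of $\lift$ and of the tap structures of $A$ and $B$.

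I expect the main obstacle to be precisely the bookkeeping in the $\Pi$ case: making the domains and codomains of the nested transports line up forces the use of $\rev$ both on a path in $\Gamma$ and, reversed again, on its lift in $\Gamma.A$, so that the proof of the identity law must invoke all four ``degeneracy'' equations $\rev(\idp{(-)})=\idp{(-)}$ (twice), $\trpt{(\idp{(-)})}(-)=(-)$ and $\lift(\idp{(-)},-)=\idp{(-)}$ in the correct order, with a final appeal to $\eta$. The $\Sigma$ case and the inheritance of the CwF-level $\Sigma$- and $\Pi$-structure from $\E$ are routine by comparison.
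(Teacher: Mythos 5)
Your definitions of the tap structures for $\Sigma\,A\,B$ and $\Pi\,A\,B$, namely $\trpt{p}(a,b) = (\trpt{p}a, \trpt{(\lift(p,a))}b)$ and $(\trpt{p}f)\,a' = \trpt{(\rev(\lift(\rev p,a')))}(f(\trpt{(\rev p)}a'))$, coincide exactly with those in the paper (equations~\eqref{eq:16} and~\eqref{eq:17}), and the verifications of the identity law~\eqref{eq:10} and of stability under re-indexing via~\eqref{eq:14}, Lemma~\ref{lem:patr} and~\eqref{eq:19} follow the paper's argument step for step. The proposal is correct and takes essentially the same approach as the paper.
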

\begin{proof}
  Given a path $p:x\pth y$ in $\Gamma$, we get functions
  $\Sigma\,A\,B\,x\fun\Sigma\,A\,B\,y$ and
  $\Pi\,A\,B\,x\fun\Pi\,A\,B\,y$ using path lifting and (in the second
  case) path reversal: 
  \begin{align}
    \trpt{p}(a,b) 
    &\defeq (\trpt{p}a, \trpt{\lift(p,a)}b) \in \Sigma\,A\,B\,y
    &&\text{where $(a,b): \Sigma\,A\,B\,x$} \label{eq:16}\\
    (\trpt{p}f)\,a 
    &\defeq \trpt{(\rev(\lift(\rev p,a)))} f(\trpt{(\rev p)}a) \in
      B(y,a) 
    &&\begin{array}[c]{@{}l}
        \text{where $f: \Pi\,A\,B\,x$}\\[-\jot]
        \text{and $a:A\,y$}\
      \end{array}\label{eq:17}
  \end{align}
  To see why \eqref{eq:17} has the correct type, consider an arbitrary
  $p:x\pth y$ in $\Gamma$, $f: \Pi\,A\,B\,x$ and $a:A\,y$. We have
  $(\rev p) : y \pth x$, and hence $\trpt{(\rev p)}a : A\,x$ and
  $f(\trpt{(\rev p)}a) : B(x,\trpt{(\rev p)}a)$.
  Next, we have
  $\lift(\rev p,a) : (y,a) \pth (x,\trpt{\rev p}a)$ and therefore
    \[\rev(\lift(\rev p,a)) : (x,\trpt{\rev p}a) \pth (y,a)\]
  This allows us to transport along this path to get
    \[ \trpt{(\rev(\lift(\rev p,a)))} f(\trpt{(\rev p)}a) \in  B(y,a) \]
  as required.
  
  Note that these definitions satisfy the required property when transporting
  along the identity path. That is, given $x : \Gamma$ and $(a,b) : \Sigma\,A\,B\,x$,
  then using property \eqref{eq:14} we have:
    \[ \trpt{(\idp x)}(a,b)
       = (\trpt{(\idp x)}a, \trpt{\lift((\idp x),a)}b)
       = (\trpt{(\idp x)}a, \trpt{(\idp (x,a))}b)
       = (a, b) \]
  Similarly, given $x : \Gamma$, $f: \Pi\,A\,B\,x$ and $a:A\,x$ then using \eqref{eq:14}
  and the properties of path reversal given in Lemma~\ref{lem:patr} we have:
  \begin{align*}
  	(\trpt{(\idp x)}f)\,a
  	  &= \trpt{(\rev(\lift(\rev (\idp x),a)))} f(\trpt{(\rev (\idp x))}a) \\ 
  	  &= \trpt{(\rev(\lift(\idp x,a)))} f(\trpt{(\idp x)}a) \\  
  	  &= \trpt{(\rev(\idp(x,a)))} f(a) \\ 
  	  &= \trpt{(\idp(x,a))} f(a) \\ 
  	  &= f(a)
  \end{align*}
  Finally, the stability of these definitions under re-indexing comes from the
  fact that $\rev$ and $\lift$ are both stable under re-indexing.
\end{proof}

\begin{thm}[{Empty, unit, Boolean and natural number types}]
  \label{thm:empub}
  Given $\Gamma\in\E$, for each $A\in\E$, the constant family
  $(A\mid x:\Gamma)$ has a tap fibration structure given by
  $\trpt{p}a=a$ (for any $a:A$) and this is stable under
  re-indexing. Taking $A$ to be the initial object $\emptyset$, the
  terminal object $1$, the coproduct $1+1$ and the natural number
  object of the topos, we have that the CwF $\Fib$ supports empty,
  unit, Boolean and natural number
  types~\cite[Exercises~E3.24--E3.26]{HofmannM:synsdt}. \qed
\end{thm}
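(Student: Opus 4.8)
The plan is to verify the claim in two stages: first that the constant family carries a (trivial) tap structure that is stable under re-indexing, and then that this suffices for $\Fib$ to inherit the four inductive type formers from the CwF on $\E$.

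For the first stage, given $A\in\E$ and $\Gamma\in\E$, I would equip the constant family $(A\mid x:\Gamma)\in\E(\Gamma)$ with the tap structure sending a path $p:x\pth y$ in $\Gamma$ and $a:A$ to $\trpt{p}a\defeq a$ — legitimate because $A\,x=A=A\,y$ does not depend on the endpoints. The required equation \eqref{eq:10} is then immediate, since $\trpt{(\idp x)}a=a$ holds by definition and no appeal to the path axioms \eqref{eq:38}--\eqref{eq:46} is needed. Stability under re-indexing is equally routine: for $\gamma:\Delta\morphism\Gamma$ the re-indexed family $(A\mid x:\Delta)$ is again constant, and by Definition~\ref{def:tapf} its re-indexed transport sends $a$ along $p:x\pth y$ in $\Delta$ to $\trpt{(\gamma\pcong p)}a=a$, which is exactly the trivial tap structure on the constant family over $\Delta$.

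For the second stage, recall that the CwF on $\E$ already supports empty, unit, Boolean and natural number types, interpreted at a context $\Gamma$ by the constant families at $\emptyset$, $1$, $1+1$ and the natural number object respectively. By the first stage each of these is a fibration, so the formation rules make sense in $\Fib$; and since $\Fib(\Gamma\ent A)=\E(\Gamma\ent A)$, the introduction, elimination and computation rules — together with their substitution-stability clauses — transfer verbatim from $\E$. The only point that needs a moment's thought is the elimination rule, where the motive is a fibration $C$ over $\Gamma.\Bool$ (respectively over $\Gamma.\N$, etc.): the eliminator is constructed in $\E$ as an element of $C$, and since elements of a fibration are just elements of the underlying $\E$-family, the same term serves as the required element of $\Fib$. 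No compatibility between the eliminator and the tap structure on $C$ is demanded, because a tap structure imposes no condition beyond \eqref{eq:10}.

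The main obstacle, such as it is, is purely bookkeeping: one must check that ``the CwF $\Fib$ supports $X$ type'' in the precise sense of \cite{HofmannM:synsdt} unwinds to exactly the data described above, and in particular that the substitution-stability clauses for the four type formers in $\Fib$ follow from those in $\E$ together with the (trivial) stability of the tap structure established in the first stage. There is no genuine mathematical content here beyond the observation that transport may be taken to act as the identity on a family whose fibres do not vary.
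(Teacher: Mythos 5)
Your proposal is correct and is precisely the argument the paper has in mind — indeed the paper marks the theorem with \qed and gives no separate proof, treating the observation that the constant family carries the trivial tap structure (with re-indexing respecting it) and that $\Fib(\Gamma\ent A)=\E(\Gamma\ent A)$ as sufficient. Your more careful unwinding, in particular the explicit remark that the eliminator needs no compatibility with the tap structure on the motive because elements are unchanged, is exactly the right way to fill in what the paper leaves tacit.
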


\begin{thm}[{Sum types}]
  \label{thm:sumt}
  Given $\Gamma\in\E$ and $A,B\in\Fib(\Gamma)$, the family
  $A\oplus B \defeq (A\,x + B\,x \mid x:\Gamma)$ of sum types has a
  tap fibration structure that is stable under re-indexing along
  morphisms $\Delta\morphism\Gamma$ in $\E$. Hence the CwF $\Fib$
  supports sum types.
\end{thm}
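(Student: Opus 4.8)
The plan is to equip $A\oplus B$ with the evident case-wise transport. Writing $\trpt{(\_)}^{A}$ and $\trpt{(\_)}^{B}$ for the given tap structures on $A,B\in\Fib(\Gamma)$, and given a path $p:x\pth y$ in $\Gamma$, I would define $\trpt{p}:A\,x+B\,x\fun A\,y+B\,y$ by copairing, so that
\[
  \trpt{p}(\inl a) \defeq \inl(\trpt{p}^{A} a)
  \qquad\text{and}\qquad
  \trpt{p}(\inr b) \defeq \inr(\trpt{p}^{B} b)
\]
for $a:A\,x$ and $b:B\,x$. This is a legitimate definition in the internal language of the topos since coproducts in $\E$ are disjoint with jointly epic injections, so a morphism out of $A\,x+B\,x$ is exactly a pair of morphisms, one out of $A\,x$ and one out of $B\,x$.

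Next I would discharge the two things that need checking. For the tap axiom \eqref{eq:10}, take $c:A\,x+B\,x$ and argue by the cases $c=\inl a$ and $c=\inr b$, using $\trpt{(\idp x)}^{A} a = a$ and $\trpt{(\idp x)}^{B} b = b$ to conclude $\trpt{(\idp x)}c = c$. For stability under re-indexing along $\gamma:\Delta\morphism\Gamma$, note that $(A\oplus B)\gamma$ is literally $A\gamma\oplus B\gamma$ and that re-indexing transports along a path $p:x\pth y$ in $\Delta$ by transporting along $\gamma\pcong p$ in $\Gamma$; since copairing commutes with post-composition, the re-indexed structure coincides with the case-wise structure built directly from the (assumed re-indexing-stable) tap structures $\trpt{(\gamma\pcong p)}^{A}$ and $\trpt{(\gamma\pcong p)}^{B}$.

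Finally, to conclude that $\Fib$ supports sum types, I would recall from Definition~\ref{def:tapf} that a fibration has the same elements as its underlying family in $\E$, and that $\E$ (via its associated CwF) already supports sum types: the introduction terms $\inl,\inr$, the dependent eliminator supplied by the universal property of the fibrewise coproduct, and the two computation rules all live at the level of elements and are stable under substitution in $\E$, hence transfer verbatim to $\Fib$ once the tap structure on $A\oplus B$ constructed above is in place. I do not expect a real obstacle here: unlike $\Pi$-types, sum types need neither path lifting nor path reversal, and unlike Kan-style models there is no filling condition to discharge. The one point deserving a line of care is that the ``definition by cases'' of $\trpt{p}$ is internally valid (it is, by disjointness of coproducts in a topos) and that it interacts correctly with the dependent eliminator into an arbitrary target fibration $C\in\Fib(\Gamma.(A\oplus B))$, which again reduces to the naturality of copairing.
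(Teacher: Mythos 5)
Your proposal is correct and follows essentially the same route as the paper: case-wise transport on $A\oplus B$ via $\inl$ and $\inr$, with the tap axiom and stability under re-indexing inherited directly from the tap structures on $A$ and $B$. The extra remarks you add (disjointness/joint-epicness of coproducts in a topos justifying the case analysis, and copairing commuting with post-composition for re-indexing stability) are correct elaborations of what the paper leaves implicit.
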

\begin{proof}
  Given a path $p:x\pth y$ in $\Gamma$, we get a function $(A\oplus
  B)\,x \morphism (A\oplus B)\,y$ by case analysis on the elements of
  $(A\oplus B)\,x = A\,x + B\,x$. Thus if $\inl$ and $\inr$ denote the
  constructors of the sum type $A\,x + B\,x$, we have
  \begin{align}
    \trpt{p}(\inl\,a) &= \inl(\trpt{p}a) &&\text{where $a:A\,x$}\\
    \trpt{p}(\inr\,b) &= \inr(\trpt{p}b) &&\text{where $b:B\,x$}
  \end{align}
  and clearly this definition inherits property \eqref{eq:10} from the
  tap fibration structure of $A$ and $B$, and is stable under
  re-indexing.
\end{proof}

Since we only consider toposes with a natural number object, the CwF
associated with $\E$ can interpret types of well-founded trees
($W$-types)~\cite[Section~5.3]{HoTT}; see \cite[Propositions~3.6 and
3.8]{MoerdijkI:weltc}. We write $\W_{x:A}B\,x$ for the object of
well-founded trees determined by a family $B\in\E(A)$, with
constructor
$\supp:\sum_{y:A}(B\,y\fun \W_{x:A}B\,x) \fun \W_{x:A}B\,x$.

\begin{thm}[{$W$-types}]
  \label{thm:wtyp}
  Given $A\in\Fib(\Gamma)$ and $B\in\Fib(\Gamma.A)$,
  the family
  \[\textstyle
  W A\,B \defeq (\W_{a:A\,x}B(x,a) \mid x:\Gamma) \in \E(\Gamma)
  \]
  has a tap fibration structure that is stable under re-indexing along
  morphisms $\Delta\morphism\Gamma$ in $\E$. Therefore the CwF $\Fib$
  supports $W$-types.
\end{thm}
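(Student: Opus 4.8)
The plan is to equip $W\,A\,B$ with transport defined by structural recursion on trees, in the same style as the treatment of $\Sigma$- and $\Pi$-types in Theorem~\ref{thm:sptyp}, and then to verify the identity-path law \eqref{eq:10} and stability under re-indexing by the $W$-induction principle, which the CwF associated with $\E$ supports alongside the $W$-types themselves. The work needed from earlier in the paper is precisely the path lifting of Lemma~\ref{lem:patl} (to move a path in $\Gamma$ up into $\Gamma.A$) and path reversal of Lemma~\ref{lem:patr}.

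First I would fix a path $p : x \pth y$ in $\Gamma$ and use the recursor for $\W_{a:A\,x}B(x,a)$ with constant motive $\W_{a:A\,y}B(y,a)$ to define $\trpt{p} : \W_{a:A\,x}B(x,a) \fun \W_{a:A\,y}B(y,a)$. On a tree $\supp(a,t)$, with $a : A\,x$, $t : B(x,a)\fun\W_{a:A\,x}B(x,a)$ and $h : B(x,a)\fun\W_{a:A\,y}B(y,a)$ the already-transported subtrees supplied by the recursor, set
\[
  \trpt{p}(\supp(a,t)) \defeq \supp\bigl(\trpt{p}a,\ \lambda b'\fun h(\trpt{(\rev(\lift(p,a)))}b')\bigr).
\]
This typechecks: $\trpt{p}a : A\,y$ by the tap structure of $A$; the lifted path $\lift(p,a) : (x,a)\pth(y,\trpt{p}a)$ from Lemma~\ref{lem:patl}, reversed by Lemma~\ref{lem:patr} to $\rev(\lift(p,a)) : (y,\trpt{p}a)\pth(x,a)$, together with the tap structure of $B\in\Fib(\Gamma.A)$, sends $b' : B(y,\trpt{p}a)$ to $\trpt{(\rev(\lift(p,a)))}b' : B(x,a)$, so $h$ applies and $\supp$ receives an argument of the right type. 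The recursive call occurs only on the immediate subtree $t\,b'$, so this is genuine structural recursion.

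Next I would verify \eqref{eq:10} by $W$-induction on $w : \W_{a:A\,x}B(x,a)$ with predicate $\trpt{(\idp x)}w = w$. For $w = \supp(a,t)$ the induction hypothesis is $\trpt{(\idp x)}(t\,b') = t\,b'$ for all $b' : B(x,a)$, and unfolding the recursion gives $\trpt{(\idp x)}(\supp(a,t)) = \supp(\trpt{(\idp x)}a,\ \lambda b'\fun \trpt{(\idp x)}(t(\trpt{(\rev(\lift(\idp x,a)))}b')))$. Here $\trpt{(\idp x)}a = a$ since $A$ is a fibration; by \eqref{eq:14} $\lift(\idp x,a) = \idp(x,a)$, whence $\rev(\lift(\idp x,a)) = \idp(x,a)$ by Lemma~\ref{lem:patr}, so $\trpt{(\idp(x,a))}b' = b'$ since $B$ is a fibration. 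The right-hand side thus reduces to $\supp(a,\ \lambda b'\fun \trpt{(\idp x)}(t\,b'))$, which by the induction hypothesis and function extensionality (valid in the extensional internal language we reason in) equals $\supp(a,t) = w$.

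Finally, stability under re-indexing along $\gamma : \Delta\morphism\Gamma$ amounts to showing that the tap structure of $(W\,A\,B)\gamma$, i.e.\ transport along $\gamma\pcong p$, coincides with the tap structure re-indexed from $W\,A\,B$; this is once more a routine $W$-induction, unfolding both sides on $\supp(a,t)$ and invoking the stability of $\lift$ and $\rev$ under re-indexing (established inside the proofs of Lemmas~\ref{lem:patl} and~\ref{lem:patr}) together with the assumed stability of the tap structures of $A$ and $B$. I expect no genuinely hard step here: the content lies entirely in arranging the recursion so that it typechecks and in carrying out the two inductions; the only mild subtlety is the appeal to function extensionality when equating the two subtree-functions in the identity-path law, which is unproblematic in the extensional setting.
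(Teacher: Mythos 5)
Your proof is essentially identical to the paper's: you define transport on $\supp(a,t)$ by the same formula $\supp(\trpt{p}a,\ \lambda b'\fun \trpt{p}(t(\trpt{(\rev(\lift(p,a)))}b')))$ (you phrase it via the recursor with accumulated hypothesis $h$, the paper writes the well-founded recursion equation directly, but these are the same), and you verify \eqref{eq:10} and re-indexing stability by the same $W$-inductions using \eqref{eq:14} and Lemma~\ref{lem:patr}. The appeal to function extensionality at the end of the $\idp$-case is harmless in the extensional internal language and the paper does it silently.
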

\begin{proof}
  Given a path  $p:x\pth y$ in $\Gamma$, we get a function $W A\,B\,x
  \fun WA\,B\,y$ via the following well-founded recursion equation:
  \begin{equation}
    \label{eq:22}
    \trpt{p}\supp(a\mathbin{,}f) = \supp\left(\trpt{p}a \mathbin{,}
      \lambda b \fun \trpt{p}f(\trpt{(\rev(\lift(p,a)))}b)\right)
    \quad
    \begin{array}[c]{@{}l}
      \text{where $a:A\,x$ and}\\
      f:B(x,a)\fun W A\,B\,x
    \end{array}   
  \end{equation}
  Note that
  $\trpt{p} a : A\,y$. Therefore, assuming that the second argument
  to the $\supp$ constructor has type $B(y,\trpt{p} a) \fun W A\,B\,y$,
  then the overall type of the constructor will be $W A\,B\,y$ as required.
  To see why the second component does have this type, consider
  an arbitrary $b : B(y,\trpt{p} a)$ and observe that
  $\rev(\lift(p,a)) : (y,\trpt{p}a) \pth (x,a)$. Therefore we have
  $\trpt{(\rev(\lift(p,a)))}b : B(x,a)$ and hence
  $f(\trpt{(\rev(\lift(p,a)))}b) : W A\,B\,x$. Finally,
  recursively transporting along $p$ gives us
  $\trpt{p}f(\trpt{(\rev(\lift(p,a)))}b) : W A\, B\, y$ as required.
  
  Well-founded inductions using the properties of reversal and lifting
  given in Lemmas~\ref{lem:patr} and \ref{lem:patl} suffice to show
  that this inherits the properties of a tap structure from those for
  $A$ and $B$:
  \begin{align*}
  	\trpt{(\idp x)}\supp(a\mathbin{,}f)
  	&= \supp\left(\trpt{(\idp x)}a \mathbin{,}
      \lambda b \fun \trpt{(\idp x)}f(\trpt{(\rev(\lift(\idp x,a)))}b)\right) \\
  	&= \supp\left(\trpt{(\idp x)}a \mathbin{,}
      \lambda b \fun \trpt{(\idp x)}f(\trpt{(\idp(x,a))}b)\right) \\
  	&= \supp\left(a \mathbin{,}
      \lambda b \fun f(b)\right) \\
  	&= \supp(a \mathbin{,} f)
  \end{align*}
  The fact that this definition is stable under re-indexing follows
  immediately from the fact that $\rev$ and $\lift$ are both stable
  under re-indexing.
\end{proof}

So far we have considered type structure that lifts from the CwF
associated with $\E$ to the CwF $\Fib$. Now we consider identity
types, where the structure of interest in $\Fib$ is not the one
inherited from $\E$. Since $\E$ is a topos, it certainly has
\emph{extensional} identity types~\cite{Martin-LoefP:inttt},
inhabitation of which coincides with judgemental equality, and those
could be lifted to $\Fib$. However, we wish to show that Moore path
objects give the \emph{intensional} version of identity types in
$\Fib$, the family of types $(\Id_A x\,y \mid x,y:A)$ inductively
generated by a single constructor $\refl_A : \prod_{x:A} \Id_A
x\,x$. We will use Hofmann's version of the structure in a CwF needed
to model such types~\cite{HofmannM:synsdt}. To do so, let us fix some
notation for a CwF $\CwF$.

Re-indexing of a family $A\in\CwF(\Gamma)$ and an element
$\alpha\in\CwF(\Gamma\ent A)$ along a morphism
$\gamma:\Delta\morphism\Gamma$ will just be denoted by
$A\,\gamma\in\CwF(\Delta)$ and
$\alpha\,\gamma\in \CwF(\Delta\ent A\,\gamma)$; and given an element
$\beta\in\CwF(\Delta\ent A\,\gamma)$, then $\pair{\gamma}{\beta}$
denotes the unique morphism $\Delta\morphism \Gamma.A$ whose
composition with ${\fst}:\Gamma.A\morphism \Gamma$ is $\gamma$ and
whose re-indexing of the generic element
${\snd}\in \CwF(\Gamma.A\ent A\,{\fst})$ is $\beta$:
\[
  \fst\comp \pair{\gamma}{\beta} = \gamma
  \qquad
  \snd \pair{\gamma}{\beta} = \beta.
\]

\begin{defi}
  \label{def:idet}
  Following Hofmann~\cite[Definition~3.19]{HofmannM:synsdt}, we say
  that a CwF $\CwF$ \emph{supports the interpretation of intensional
    identity types} if for each object $\Gamma\in\CwF$ and each family
  $A\in\CwF(\Gamma)$ the following data is given and is stable under
  re-indexing along any $\gamma:\Delta\morphism \Gamma$:
  \begin{itemize}

  \item a family $\Id_A\in\C(\Gamma.A.A\,{\fst})$,

  \item a morphism
    $\refl_A : \Gamma.A\morphism \Gamma.A.A\,{\fst}.\Id_A$ such that
    ${\fst}\comp\refl_A$ equals the diagonal morphism
    $\pair{\id}{\snd}: \Gamma.A\morphism
    \Gamma.A.A\,{\fst}$,

  \item a function mapping each $B\in\C(\Gamma.A.A\,{\fst}.\Id_A)$ and
    $\beta\in\C(\Gamma.A \ent B\,\refl_A)$ to an element $\J_A
    B\,\beta\in\C(\Gamma.A.A\,{\fst}.\Id_A \ent B)$ such that the
    re-indexing $(\J_A B\,\beta)\,\refl_A$ equals $\beta$.
  \end{itemize}
\end{defi}

Given an object $\Gamma$ of the topos $\E$ and a family
$A\in\E(\Gamma)$, we can use the family of Moore path objects
$\_\pth\_$ (Definition~\ref{def:moopo}) to define a family
$\Id_A\in\E(\Gamma.A. A\,{\fst})$ as follows:
\begin{equation}
  \label{eq:18}
  \Id_A \defeq (a_1 \pth a_2 \mid ((x,a_1), a_2):\Gamma.A.A\,{\fst})  
\end{equation}
We will show that this together with suitable $\refl$ and $\J$
operations give an instance of Definition~\ref{def:idet} for the CwF
$\Fib$. In particular $\Id_A$ has a tap fibration structure when $A$
does. To see this we first need to analyse paths in $\Gamma.A$ in
terms of paths in $\Gamma$ and in the fibres $A\,x$ (for $x:\Gamma$).

\begin{lem}
  \label{lem:star}
  Given $\Gamma\in\E$ and $A\in\Fib(\Gamma)$, for each path
  $p:(x,a)\pth (y,b)$ in $\Gamma.A$, there is a path 
  $\snd(p) : \trpt{({\fst}\pcong p)} a \pth b$ in $A\,y$
  satisfying
  \begin{equation}
    \label{eq:33}
    \snd(\idp(x,a)) = \idp a
  \end{equation}
  and stable under re-indexing along any
  $\gamma:\Delta\morphism \Gamma$ in $\E$.
\end{lem}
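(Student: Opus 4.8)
The plan is to project $p$ down to a path in $\Gamma$, read off its fibre component as a time‑indexed family of elements, and transport each of those into the single fibre $A\,y$ along the appropriate ``tail'' of the projected path. Concretely, fix $p:(x,a)\pth(y,b)$ in $\Gamma.A$ and set $q\defeq\fst\pcong p:x\pth y$; by the definition of $\pcong$ in \eqref{eq:7} we have $(\fst\pcong p)\at j=\fst(p\at j)$, and $\shape{q}=\shape{p}$. Writing $a_j\defeq\snd(p\at j):A(q\at j)$ for the fibre component of $p$ at time $j$ (the second projection of the element $p\at j$ of $\Gamma.A$), Definition~\ref{def:moopo} gives $a_\src=a$, and $q\at j=y$ with $a_j=b$ whenever $j\totop\shape{p}$.

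The key ingredient is, for each $j:\I$, the suffix path
\[
  \from{j}{q}\defeq\abs{k}{\shape{q}\monus j}(q\at(j+k)):q\at j\pth y .
\]
This is well typed: its start point is $q\at(j+\src)=q\at j$; and its endpoint is $q\at(j+(\shape{q}\monus j))$, which by \eqref{eq:11} and the total order equals $q\at\shape{q}=y$ when $j\tot\shape{q}$ and equals $q\at j=y$ when $\shape{q}\tot j$. I would then define
\[
  \snd(p)\defeq\abs{j}{\shape{p}}\bigl(\trpt{(\from{j}{q})}\,a_j\bigr),
\]
which by Definition~\ref{def:boua} is a path with start point $\trpt{(\from{\src}{q})}a_\src$ and endpoint $\trpt{(\from{\shape{p}}{q})}a_{\shape{p}}$. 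Now $\from{\src}{q}=\abs{k}{\shape{q}}(q\at k)=q$ by \eqref{eq:21}, so the start point is $\trpt{q}a=\trpt{(\fst\pcong p)}a$; and since $\shape{q}=\shape{p}$ we have $\shape{q}\monus\shape{p}=\src$, so $\from{\shape{p}}{q}=\idp(q\at\shape{p})=\idp y$ by \eqref{eq:20}, whence the endpoint is $\trpt{(\idp y)}b=b$ by the tap axiom \eqref{eq:10}. Therefore $\snd(p):\trpt{(\fst\pcong p)}a\pth b$, as required.

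Equation~\eqref{eq:33} is the same computation specialised to $p=\idp(x,a)$: then $\shape{p}=\src$, each $p\at j=(x,a)$, and $q=\idp x$ by Lemma~\ref{lem:degcp}, so every $\from{j}{q}=\idp x$ and $\snd(\idp(x,a))=\abs{j}{\src}a=\idp a$. For stability under re‑indexing along $\gamma:\Delta\morphism\Gamma$, I would observe that $\pcong$, bounded abstraction, $\min$, $\monus$ and the tap transport of $A$ are all stable (the transport of $A\,\gamma$ along $r$ being $\trpt{(\gamma\pcong r)}$), and that $\fst\comp(\gamma\times\id)=\gamma\comp\fst$ together with \eqref{eq:19} makes the suffix paths $\from{j}{q}$ — and hence $\snd(p)$ — commute with re‑indexing.

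The main obstacle is choosing the right path in $\Gamma$ to transport $a_j$ along. Tempting shortcuts fail: transporting $a_j$ back to $A\,x$ along $\rev(\upto{j}{q})$ and then forward along $q$, or forming $p\pcomp\rev(\lift(\fst\pcong p,a))$ in $\Gamma.A$ and projecting, each need a form of functoriality of transport (for instance $\trpt{q}\comp\trpt{(\rev q)}=\id$), which Definition~\ref{def:tapf} does not provide — it only guarantees the unit law \eqref{eq:10}. The suffix paths $\from{j}{q}$ are designed precisely so that they interpolate strictly between $\from{\src}{q}=q$ and $\from{\shape{p}}{q}=\idp y$, so that identifying the two endpoints of $\snd(p)$ needs nothing beyond \eqref{eq:10}.
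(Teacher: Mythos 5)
Your proof is correct and follows essentially the same strategy as the paper: define a ``suffix path'' $\from{j}{q}:q\at j\pth y$ and set $\snd(p)=\abs{j}{\shape{p}}\bigl(\trpt{(\from{j}{q})}\snd(p\at j)\bigr)$, identifying endpoints using only the unit law \eqref{eq:10}. The only difference is cosmetic: you build $\from{j}{q}$ directly via a bounded abstraction $\abs{k}{\shape{q}\monus j}(q\at(j+k))$, whereas the paper expresses it as $\rev\bigl(\upto{(\shape{q}\monus j)}{(\rev q)}\bigr)$; the two give the same path, and both routes to the stability claim (via stability of bounded abstraction and \eqref{eq:19}, or via stability of $\rev$ and $\upto{}$) work.
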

\begin{proof}
  We use a reversed version of the path-contraction operation from
  Lemma~\ref{lem:patc}: for each path $q:x\pth y$ and each $i:\I$,
  define $\from{i}{q}:q\at i \pth y$ by
  \begin{equation}
    \label{eq:32}
    \from{i}{q} \defeq \rev\left(\upto{(\shape{q}\monus i)}{(\rev
        q)}\right) 
  \end{equation}
  Given $p:(x,a)\pth (y,b)$, we get
  ${\fst}\pcong p:x\pth y$ and hence for each $i:\I$ we have
  $\from{i}{({\fst}\pcong p)} : ({\fst}\pcong p)\at i \pth y$; and
  since
  $\snd(p\at i) : A(\fst(p\at i)) = A(({\fst}\pcong p)\at i)$, we
  get $\trpt{(\from{i}{({\fst}\pcong p)})}\snd(p\at i) : A\,y$. So we
  can define
  \begin{gather}
    \label{eq:79}
    \snd(p) = \abs{i}{\shape{p}}\left(\trpt{(\from{i}{({\fst}\pcong
        p)})}\snd(p\at i)\right) 
  \end{gather}
  to get a path $\trpt{({\fst}\pcong p)}a \pth b$ in
  $A\,y$. Property \eqref{eq:33} holds since, using \eqref{eq:20}, we have
  \begin{align*}
  	\snd(\idp(x,a)) 
  	  &= \abs{i}{\shape{\idp(x,a)}}\left(\trpt{(\from{i}{({\fst}\pcong
        \idp(x,a))})}\snd(\idp(x,a)\at i)\right)\\
      &= \abs{i}{\src}\left(\trpt{(\from{i}{
        (\idp x)})}\snd(x,a)\right)\\
      &= \idp\left(\trpt{(\from{\src}{(\idp x)})} a\right)\\
      &= \idp\left(\trpt{(\idp x)} a\right)\\
      &= \idp a
  \end{align*} 
  Stability under re-indexing follows from \eqref{eq:19} and the fact
  that $\rev$ is stable under re-indexing.
\end{proof}

\begin{lem}
  \label{lem:idet}
  Given $\Gamma\in\E$ and a fibration $A\in\Fib(\Gamma)$, the family
  $\Id_A\in\E(\Gamma.A.A\,{\fst})$ defined in \eqref{eq:18} has a tap
  fibration structure that is stable under re-indexing along morphisms
  $\Delta\morphism \Gamma$ in $\E$.
\end{lem}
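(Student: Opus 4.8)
The plan is to equip $\Id_A$ with a transport operation that takes a path $\pi$ in $\Gamma.A.A\,{\fst}$ — living over a point $(((x,a_1),a_2))$ with $a_1,a_2:A\,x$ and a Moore path $a_1\pth a_2$ — to the corresponding transport on the fibre, which is itself a Moore path in some $A\,y$. So I first need to understand paths in $\Gamma.A.A\,{\fst}$. A point of this object is a triple with a base point of $\Gamma$ and two points of the fibre, so by two applications of Lemma~\ref{lem:star} a path $\pi$ in $\Gamma.A.A\,{\fst}$ decomposes into: a path $p$ in $\Gamma$ (namely ${\fst}\comp{\fst}\pcong \pi$ followed through the projections), together with paths in the relevant fibres relating the transported endpoints. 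Concretely, from $\pi$ I extract $p:x\pth y$ in $\Gamma$, a path $q_1$ from $\trpt{p}a_1$ to $b_1$ in $A\,y$, and a path $q_2$ from $\trpt{p}a_2$ to $b_2$ in $A\,y$, using $\fst$-projected paths and Lemma~\ref{lem:star}.

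Second, given an element of $(\Id_A)$ over the source — that is, a Moore path $w:a_1\pth a_2$ in $A\,x$ — I must produce a Moore path from $b_1$ to $b_2$ in $A\,y$. The natural recipe is to take the composite
\[
  q_2 \pcomp (\trpt{p}\pcong w) \pcomp (\rev q_1) : b_1 \pth b_2 ,
\]
where $\trpt{p}\pcong w$ is the image of $w$ under the function $\trpt{p}:A\,x\fun A\,y$ applied pointwise to the path via the congruence operation \eqref{eq:7} (noting $\trpt{p}(w\at\src)=\trpt{p}a_1$ and $\trpt{p}(w\at\shape w)=\trpt{p}a_2$). This defines the tap structure $\trpt{\pi}(w)$.

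Third, I must verify the unit law \eqref{eq:10}: when $\pi=\idp(((x,a_1),a_2))$, we get $p=\idp x$, and by \eqref{eq:33} and \eqref{eq:10} for $A$ the fibre paths $q_1,q_2$ reduce to $\idp a_1$, $\idp a_2$ respectively (using $\trpt{(\idp x)}=\id$); also $\trpt{(\idp x)}\pcong w = w$ by the congruence laws of Lemma~\ref{lem:degcp}. Then the composite collapses to $\idp a_2 \pcomp w \pcomp \rev(\idp a_1) = \idp a_2 \pcomp w \pcomp \idp a_1 = w$ by the unit laws of Lemma~\ref{lem:degcp}. Fourth, stability under re-indexing along $\gamma:\Delta\morphism\Gamma$ follows because every ingredient — the $\fst$-projections, the congruence operation $(\_)\pcong(\_)$, the tap structure of $A$ (stable by Definition~\ref{def:tapf}), path reversal and composition (stable by Lemma~\ref{lem:patr} and Lemma~\ref{lem:degcp}), and the decomposition of Lemma~\ref{lem:star} (stable by that lemma) — is stable under re-indexing, and these stabilities compose.

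The main obstacle I anticipate is bookkeeping in the decomposition step: carefully identifying which projection morphisms out of $\Gamma.A.A\,{\fst}$ to apply, in what order, so that Lemma~\ref{lem:star} can be invoked twice to recover exactly $p$, $q_1$ and $q_2$ with the stated endpoints, and then confirming the composite $q_2\pcomp(\trpt{p}\pcong w)\pcomp(\rev q_1)$ is genuinely composable (the endpoint of $\rev q_1$ is $\trpt{p}a_1$, which matches the source of $\trpt{p}\pcong w$; its target $\trpt{p}a_2$ matches the source of $q_2$). Once composability is checked, the unit law and stability are routine applications of Lemmas~\ref{lem:degcp}, \ref{lem:patr} and \ref{lem:star}.
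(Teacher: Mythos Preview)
Your proposal is correct and follows essentially the same approach as the paper: decompose the path in $\Gamma.A.A\,{\fst}$ via projections and Lemma~\ref{lem:star} into a base path $p'$ and two fibre paths $p_1,p_2$, then define $\trpt{\pi}w \defeq p_2\pcomp((\trpt{p'})\pcong w)\pcomp\rev p_1$, with the unit law and stability following from Lemmas~\ref{lem:degcp}, \ref{lem:patr} and \ref{lem:star} exactly as you outline. The only point where the paper is slightly more explicit than you is in the second invocation of Lemma~\ref{lem:star}: to extract $q_2$ one must first re-bundle the path via the morphism $\pair{\fst\comp\fst}{\snd}:\Gamma.A.A\,{\fst}\morphism\Gamma.A$ (sending $((x,a_1),a_2)$ to $(x,a_2)$) rather than a plain $\fst$, which is precisely the bookkeeping you flag as the main obstacle.
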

\begin{proof}
  First note that re-indexing the fibration $A\in\Fib(\Gamma)$ along
  ${\fst}:\Gamma.A\morphism \Gamma$ we get
  $A\,{\fst} \in\Fib(\Gamma.A)$. Given a path
  $p:((x,a_1),a_2)\pth((y,b_1),b_2)$ in $\Gamma.A.A\,{\fst}$, define
  \[
    p' \defeq {\fst}\pcong({\fst}\pcong p) : x \pth y
  \]
  We can apply Lemma~\ref{lem:star} to the paths
  ${\fst}\pcong p : (x,a_1)\pth (y,b_1)$ and
  $\pair{\fst\comp\fst}{\snd}\pcong p : (x,a_2)\pth (y,b_2)$ in
  $\Gamma.A$ to get paths in $A\,y$
  \begin{align*}
    p_1 &\defeq\snd({\fst}\pcong p): \trpt{{p'}}a_1 \pth b_1\\
    p_2 &\defeq \snd(\pair{\fst\comp\fst}{\snd}\pcong p) :
  \trpt{{p'}}a_2 \pth b_2
  \end{align*}
  (using the fact that
  $\fst\comp \pair{\fst\comp\fst}{\snd} = \fst\comp\fst$). Thus for
  each path $q:a_1\pth a_2$ in $A\,x$, we have
  \[
    \begin{array}{ccc}
      A\,x
      & \xrightarrow{\trpt{{p'}}}
      & A\,y\\
      \null\\
      \begin{tikzcd} a_1 \arrow[r,squiggly,"q"] & a_2 \end{tikzcd} 
      &&
        \begin{tikzcd}[row sep = large]
            \trpt{{p'}}a_1 \arrow[r, squiggly, "(\trpt{{p'}})\pcong q"]
          & \trpt{{p'}}a_2 \arrow[d, squiggly, "p_2"] \\
            b_1 \arrow[u, squiggly, "\rev(p_1)"]
          & b_2
        \end{tikzcd}                                      
    \end{array}
    \vspace{7pt}
  \]
  and can compose together the paths in $A\,y$ to get
  $p_2\pcomp ((\trpt{{p'}})\pcong q) \pcomp \rev(p_1) : b_1\pth b_2$.
  So altogether we get a function
  $\trpt{p}: \Id_A((x,a_1),a_2) \fun \Id_A((y,b_1),b_2)$ defined by:
  \begin{equation}
    \label{eq:48}
    \trpt{p}q = \snd(\pair{\fst\comp\fst}{\snd}\pcong
    p)\pcomp\left((\trpt{({\fst}\pcong({\fst}\pcong 
        p))})\pcong q\right) \pcomp \rev(\snd({\fst}\pcong p))
  \end{equation}
  The properties of Moore path reversal (Lemma~\ref{lem:patr})
  together with Lemma~\ref{lem:star} suffice to show that this
  definition inherits the property of a tap fibration structure
  \eqref{eq:10} from the one for $A$ and that it is stable under
  re-indexing. For example, given $x : \Gamma$, $a_1, a_2 : A\,x$
  and $q : a_1 \pth a_2$ we have:
  \begin{align*}
  	\trpt{(\idp((x,a_1),a_2))}q
  	  &= \snd(\pair{\fst\comp\fst}{\snd}\pcong\idp((x,a_1),a_2)) \\
      &\qquad \pcomp\left((\trpt{({\fst}\pcong({\fst}\pcong 
        \idp((x,a_1),a_2)))})\pcong q\right) \\
      &\qquad \pcomp \rev(\snd({\fst}\pcong \idp((x,a_1),a_2)))\\
  	  &= \snd(\idp(x,a_2)) \pcomp\left((\trpt{(\idp x)})\pcong q\right) \pcomp \rev(\snd(\idp(x,a_1)))\\
  	  &= (\idp a_2) \pcomp q \pcomp (\idp a_1)\\
  	  &= q \\
  \end{align*}
  as required.
\end{proof}

\begin{thm}[{Identity types}]
  \label{thm:idet}
  The CwF $\Fib$ of Definition~\ref{def:tapf} supports the
  interpretation of intensional identity types
  (Definition~\ref{def:idet}), given by Moore path objects as in
  \eqref{eq:18}.
\end{thm}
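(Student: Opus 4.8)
The plan is to supply the three pieces of data demanded by Definition~\ref{def:idet} and then verify the $\J$-computation rule and stability under re-indexing. The first, the family $\Id_A\in\E(\Gamma.A.A\,{\fst})$ of~\eqref{eq:18}, already carries a tap fibration structure that is stable under re-indexing, by Lemma~\ref{lem:idet}; hence $\Id_A\in\Fib(\Gamma.A.A\,{\fst})$ and the context $\Gamma.A.A\,{\fst}.\Id_A$ exists in $\Fib$. For the second, I would take $\refl_A:\Gamma.A\morphism\Gamma.A.A\,{\fst}.\Id_A$ to be the morphism that in the internal language sends $(x,a)$ to $(((x,a),a),\idp a)$, using the degenerate path of Definition~\ref{def:degp}; equivalently $\refl_A=\pair{\pair{\id}{\snd}}{\idp\comp\snd}$. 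Then ${\fst}\comp\refl_A=\pair{\id}{\snd}$ holds by construction, and stability of $\refl_A$ under re-indexing follows from $\gamma\pcong(\idp x)=\idp(\gamma\,x)$ (Lemma~\ref{lem:degcp}) together with the stability clause of Lemma~\ref{lem:idet}.

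The substance of the theorem is the $\J$ operation, and the idea is to replace any form of Kan filling by transport along a single canonical Moore path realising ``singleton contractibility'' concretely (cf.\ Remark~\ref{rem:sing-contr}). Given $x:\Gamma$, $a_1:A\,x$ and a path $q:a_1\pth a_2$ in $A\,x$, bounded abstraction (Definition~\ref{def:boua}) together with path contraction (Lemma~\ref{lem:patc}) produce a Moore path in the context $\Gamma.A.A\,{\fst}.\Id_A$
\[
\theta_q\;\defeq\;\abs{i}{\shape{q}}(((x,a_1),q\at i),\upto{i}{q})
\]
(well-typed since $\upto{i}{q}:a_1\pth q\at i$). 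Because $\upto{\src}{q}=\idp a_1$ and $q\at\src=a_1$, while $\upto{\shape{q}}{q}=q$ and $q\at\shape{q}=a_2$, the endpoints of $\theta_q$ are $\refl_A(x,a_1)$ and $(((x,a_1),a_2),q)$. Then, given a fibration $B\in\Fib(\Gamma.A.A\,{\fst}.\Id_A)$ and an element $\beta\in\Fib(\Gamma.A\ent B\,\refl_A)$, I would define $\J_A B\,\beta$ at a point $(((x,a_1),a_2),q)$ to be $\trpt{(\theta_q)}(\beta(x,a_1))\in B(((x,a_1),a_2),q)$, transporting along $\theta_q$ using the tap structure of $B$. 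This is a legitimate element of $\Fib(\Gamma.A.A\,{\fst}.\Id_A\ent B)=\E(\Gamma.A.A\,{\fst}.\Id_A\ent B)$ because $\theta_q$ and the transport for $B$ are functions in the topos.

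It remains to check the computation rule and stability. Re-indexing $\J_A B\,\beta$ along $\refl_A$ and evaluating at $(x,a)$ gives $\J_A B\,\beta$ at $\refl_A(x,a)=(((x,a),a),\idp a)$, namely $\trpt{(\theta_{\idp a})}(\beta(x,a))$; since $\shape{\idp a}=\src$, property~\eqref{eq:20} of bounded abstraction collapses $\theta_{\idp a}$ to $\idp(\refl_A(x,a))$, and then the tap unit law~\eqref{eq:10} for $B$ gives $\trpt{(\theta_{\idp a})}(\beta(x,a))=\beta(x,a)$, so that $(\J_A B\,\beta)\,\refl_A=\beta$ as required. Stability of $\J_A B\,\beta$ under re-indexing along any $\gamma:\Delta\morphism\Gamma$ then reduces to stability of its constituents — bounded abstraction~\eqref{eq:19}, path contraction, $\refl_A$, and the tap structure of $B$ — each of which is stable.

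I do not expect a serious obstacle: once one notices that path induction here is implemented simply by $B$-transport along the contraction path $\theta_q$, the construction is essentially forced, and no filling condition is needed. The points requiring care are (i) checking that $\theta_q$ runs \emph{exactly} from $\refl_A(x,a_1)$ — which rests on $\upto{\src}{q}=\idp a_1$ and the degeneracy property~\eqref{eq:20} — so that the tap unit law yields the $\J$-computation rule definitionally, and (ii) threading the several re-indexing-stability clauses of Lemmas~\ref{lem:degcp}, \ref{lem:patc} and~\ref{lem:idet} through the composite definition; both are routine.
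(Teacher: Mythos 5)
Your construction coincides with the paper's: $\refl_A(x,a)=(((x,a),a),\idp a)$ and $\J$ defined by transporting $\beta(x,a_1)$ along the contraction path $\abs{i}{\shape{q}}(((x,a_1),q\at i),\upto{i}{q})$, with the computation rule following from \eqref{eq:20} and the tap unit law \eqref{eq:10}, and with $\Id_A$'s tap structure supplied by Lemma~\ref{lem:idet}. This is the paper's proof, carried out in the same steps.
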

\begin{proof}
  In view of Lemma~\ref{lem:idet}, it just remains to define the
  $\refl$ and $\J$ operations as in Definition~\ref{def:idet}.  Given
  $\Gamma\in\E$ and $A\in\Fib(\Gamma)$, we get $\refl_A :
  \Gamma.A\morphism \Gamma.A.A\,{\fst}.\Id_A$ by defining
  \begin{equation}
    \label{eq:35}
    \refl_A(x,a) \defeq (((x,a),a),\idp a)
  \end{equation}
  Note that
  $({\fst}\comp\refl_A)(x,a) = ((x,a),a) = \pair{\id}{\snd}(x,a)$, as
  required.

  To define $\J$ we combine transport along paths with the fact that
  singleton types are contractible (Remark~\ref{rem:sing-contr}). More
  specifically, given $B\in\Fib(\Gamma.A.A\,{\fst}.\Id_A)$ and
  $\beta\in\E(\Gamma.A \ent B\,\refl_A)$, for each path
  $p:a_1\pth a_2$ in $A\,x$ using Lemma~\ref{lem:patc} we have the
  following path in $\Gamma.A.A\,{\fst}.\Id_A$:
  \[
  \abs{i}{\shape{p}}(((x,a_1), p\at i), \upto{i}{p}) :
  (((x,a_1),a_1),\idp a_1) \pth (((x,a_1),a_2),p)
  \]
  Since $B$ has a tap fibration structure we can transport
  $\beta(x,a_1):B(((x,a_1),a_1),\idp a_1)$ along this path to get an
  element of $B(((x,a_1),a_2),p)$. So we can define
  $\J_AB\,\beta\in\E(\Gamma.A.A\,{\fst}.\Id_A\ent B)$ by:
  \begin{equation}
    \label{eq:25}
    \J_A B\,\beta\,(((x,a_1), a_2),p) \defeq
    \trpt{\left(\abs{i}{\shape{p}}(((x,a_1), p\at i),
        \upto{i}{p})\right)}\beta(x,a_1)  
  \end{equation}
  $\J$ has the required computation property, because
  \begin{align*}
    ((\J_A B\,\beta)\refl_A)(x,a) 
    &= \J_AB\,\beta\,(((x,a),a),\idp a)
    &&\text{by \eqref{eq:35}}\\
    &= \trpt{\left(\abs{i}{\src}{(((x,a),a)\idp a)}\right)}\beta(x,a)
    &&\text{by \eqref{eq:25} and \eqref{eq:36}}\\
    &= \trpt{\idp(((x,a),a)\idp a)}\beta(x,a)
    &&\text{by \eqref{eq:20}}\\
    &= \beta(x,a)
    &&\text{by \eqref{eq:10}}
  \end{align*}
  Stability of $\refl$ under re-indexing follows from the fact that
  the congruence operations $\gamma\pcong{}$ preserve degenerate
  paths; and stability of $\J$ uses \eqref{eq:19} and the fact that
  the path contraction operation $\upto{i}{(\_)}$ is preserved by the
  congruence operations $\gamma\pcong{}$.
\end{proof}

\begin{rem}[{Associative tap fibrations}]
  Definition~\ref{def:tapf} does not require the transport action
  $(\trpt{(\_)} : x\pth y \fun (A\,x \fun A\,y) \mid x,y:\Gamma)$ of
  a tap fibration $A\in\Fib(\Gamma)$ to be strictly associative. In
  other words, for all paths $p:x\pth y$ and $q:y\pth z$ and all
  $a:A\,x$ we do not necessarily have
  \begin{equation}
    \label{eq:42}
    \trpt{(q\pcomp p)} a = \trpt{q}(\trpt{p}a)
  \end{equation}
  As we have seen, that property is not needed to prove that
  fibrations give a model of type theory. Nevertheless, if one changes
  the definition by requiring \eqref{eq:42}, then the analogues of
  Theorems~\ref{thm:sptyp}--\ref{thm:wtyp} and \ref{thm:idet} do hold
  for this stronger notion of fibration, although we do not prove that
  here.

  With Definition~\ref{def:tapf} as it stands, one has associativity
  up to homotopy, i.e.~there is a path
  $\trpt{(q\pcomp p)} a \pth \trpt{q}(\trpt{p}a)$. This is a
  consequence of Theorem~\ref{thm:idet}, which allows one to use
  path induction~\cite[Section~2.9]{HoTT}) to construct a path
  $\trpt{(q\pcomp p)} a \pth \trpt{q}(\trpt{p}a)$ for any $p$ from
  the case when $p = \idp x$, where one has the degenerate path for
  $\trpt{(q\pcomp (\idp x))} a = \trpt{q}a = \trpt{q}(\trpt{(\idp
    x)}a)$.
\end{rem}

\begin{rem}[{Weak fibrations}]
  In the definition of tap fibration, if one replaces the equality
  \eqref{eq:10} by a homotopy, $\trpt{(\idp x)}a\pth a$, the resulting
  weak notion of tap fibration satisfies versions of
  Theorems~\ref{thm:sptyp}--\ref{thm:wtyp}, albeit with more
  complicated proofs. The same is true for Theorem~\ref{thm:idet}
  except that one gets only \emph{propositional} identity types, where
  there is a path between $(\J_A B\,\beta)\,{\refl_A}$ and $\beta$,
  rather than an equality in $\E$
  (cf.~\cite[Section~9.1]{CoquandT:cubttc} and
  \cite{VanDenBergB:patcpi}).
\end{rem}

\section{Function Extensionality}
\label{sec:fune}

Let $(\E,\R)$ be an order-ringed topos
(Definition~\ref{def:ordrt}). In this section we show that in the CwF
$\Fib$ constructed from $(\E,\R)$ as in the previous section, it is
the case that functions behave extensionally with respect to the
intensional identity types given by Moore paths.

So far we have only used the order and additive
structure of $\R$ (axioms \eqref{eq:38}--\eqref{eq:46} in
Fig.~\ref{fig:ordcra}). For function extensionality to hold, we need
more than that. To see why, consider the constant function
$\K_{\src}=\lambda i\fun \src:\I\fun\I$ and the identity function
$\id:\I\fun\I$.\footnote{Function extensionality for $\Fib$ only
  concerns functions between fibrant families; but as we noted in
  Section~\ref{sec:tap}, in $\Fib$ all objects, and in particular $\I$,
  are fibrant as trivial families over the terminal object.}  If
equality means existence of a Moore path, then these functions are
extensionally equal, because we have
$\lambda i\fun\abs{j}{i}j : \prod_{i:\I}(\K_{\src}\,i \pth \id\,i)$.
So if the principle of function extensionality is to hold with respect
to Moore paths, then there will have to be a path
$p: \K_{\src}\pth \id$ in $\I\fun\I$. What is its shape $\shape{p}$?
Since $p$ does not depend upon any assumptions, $\shape{p}$ would have
to be a global element $1\morphism\I$ in the topos $\E$. Axioms
\eqref{eq:38}--\eqref{eq:46} only guarantee the existence of one such
global element, namely $\src$. However, if $\shape{p}=\src$ then
$\K_{\src}=\id$, so $(\forall i:\R)\;\src=i$ and the model of type
theory is degenerate. Therefore we need some extra assumptions about
$\R$ if function extensionality is to hold without collapsing
everything. This is where axioms~\eqref{eq:43}--\eqref{eq:45} come
into play. These may not be the minimal assumptions needed for
function extensionality, but they are a well-known part of
(constructive, ordered) Algebra~\cite[chapter~VI]{BourbakiN:alg} that
does the job, and this makes easier the task of finding specific
models with good properties (which we address in
Section~\ref{sec:grotm}).

Given an object $\Gamma\in\E$ and a family $A\in\E(\Gamma)$, if
$p:f\pth g$ is a path between dependent functions
$f,g:\prod_{x:\Gamma}A\,x$, then for each $x:\Gamma$ we can apply the
path congruence operation \eqref{eq:7} to
$\lambda f\fun f\,x : (\prod_{x:\Gamma}A\,x) \fun A\,x$ and $p$ to
obtain a path $(\lambda f\fun f\,x)\pcong p : f\,x \pth g\,x$ in
$A\,x$. This gives us the following function (which coincides with the
canonical function obtained by path induction as
in~\cite[Section~2.9]{HoTT}):
\begin{align}
  &\begin{array}[c]{@{}l}
     {\happly}
     : \textstyle (f\pth g) \fun \prod_{x:\Gamma}(f\,x\pth g\,x)\\
     \happly\,p\,x \defeq (\lambda f\fun f\,x)\pcong p 
   \end{array}
  &&\text{(where $\textstyle f,g:\prod_{x:\Gamma}A\,x$)}
\end{align}

\begin{thm}[{function extensionality modulo $\pth$}]
  \label{thm:funext}
  Given an order-ringed topos $(\E,\R)$, an object $\Gamma\in\E$ and a
  family $A\in\E(\Gamma)$, for all $f,g:\prod_{x:\Gamma}A\,x$ there is
  a function
  $\funext:(\prod_{x:\Gamma}(f\,x\pth g\,x)) \fun (f\pth g)$.
  Furthermore this function is
  quasi-inverse~\cite[Definition~2.4.6]{HoTT} to $\happly$, that is,
  that for all $e : \prod_{x:\Gamma}(f\,x\pth g\,x)$ and $p:f\pth g$
  there are paths $\varepsilon\,e :\happly(\funext e) \pth e$ in
  $\prod_{x:\Gamma}(f\,x\pth g\,x)$ and
  $\eta\,p: p \pth \funext(\happly p)$ in $f\pth g$.
\end{thm}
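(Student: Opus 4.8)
The plan is to construct $\funext$ directly using bounded abstractions, exploiting the multiplicative structure of $\R$ to produce a single global shape for the resulting path. Given $e : \prod_{x:\Gamma}(f\,x \pth g\,x)$, each $e\,x$ has its own shape $\shape{e\,x} : \I$, and these need not be uniformly bounded. The key trick will be to use the ordered-ring structure — in particular axioms \eqref{eq:43}--\eqref{eq:45} — to build a function $\I \fun (\prod_{x:\Gamma}A\,x)$ whose value at parameter $t$ sends $x$ to $e\,x \at (t \mult \shape{e\,x})$, or something of this form, so that at $t = \src$ we recover $f$ and at $t = \unit$ we recover $g$ (using $p \at \shape{p} = g\,x$ and the fact that $\min(\unit \mult \shape{e\,x}, \shape{e\,x}) = \shape{e\,x}$). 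Concretely I would set
\[
  \funext\,e \defeq \abs{t}{\unit}(\lambda x \fun e\,x \at (t \mult \shape{e\,x}))
\]
and check this is well-typed: at $t = \src$ the inner function is $\lambda x \fun e\,x \at \src = \lambda x \fun f\,x = f$, and the shape of the bounded abstraction is the global element $\unit : \I$ (which exists and is positive by \eqref{eq:45} applied suitably, or rather is given outright), so no degeneracy is forced. One then verifies $\funext\,e \at \unit = g$ using $\min(\unit,\unit) = \unit$ and $e\,x \at \shape{e\,x} = g\,x$.

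Next I would establish that $\funext$ is quasi-inverse to $\happly$. For the homotopy $\eta\,p : p \pth \funext(\happly\,p)$ in $f \pth g$, I would not try to prove a strict equality (the shapes $\shape{p}$ and $\unit$ generally differ), but instead build a path in the Moore path object $f \pth g$. The cleanest route is to use the path-contractibility of singletons (Remark~\ref{rem:sing-contr}) together with Theorem~\ref{thm:idet}: once identity types are known to be modelled, the statement ``$\happly$ has a quasi-inverse'' can be reduced by path induction to checking the case $p = \idp f$, where $\happly(\idp f)\,x = (\lambda f \fun f\,x)\pcong \idp f = \idp(f\,x)$, and $\funext(\lambda x \fun \idp(f\,x))$ should be (a path to) $\idp f$ since all shapes $\shape{\idp(f\,x)} = \src$, making the bounded abstraction $\abs{t}{\unit}(\lambda x \fun f\,x)$, whose underlying function is constantly $f$; this is a genuinely degenerate-underlying path though with shape $\unit$ rather than $\src$, so one still needs a path $\abs{t}{\unit}(\lambda x\fun f\,x) \pth \idp f$, which is supplied by a reparametrisation along $\min$. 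For the other homotopy $\varepsilon\,e : \happly(\funext\,e) \pth e$ in $\prod_{x:\Gamma}(f\,x \pth g\,x)$, I would compute $\happly(\funext\,e)\,x = (\lambda h \fun h\,x)\pcong \funext\,e$, which by \eqref{eq:19} equals $\abs{t}{\unit}(e\,x \at (t \mult \shape{e\,x}))$, and then exhibit pointwise (in $x$) a path from this reparametrised path to $e\,x$ itself, again by a bounded-abstraction argument using the $\min$ function and properties \eqref{eq:20}--\eqref{eq:21}; by function extensionality at the level of the path-valued family (or just working pointwise since $\I \fun A\,x$ is an exponential) this assembles into the required path in $\prod_{x:\Gamma}(f\,x\pth g\,x)$.

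The main obstacle, I expect, is the well-typedness and the homotopy arguments around the discrepancy between the shape $\unit$ that $\funext\,e$ always carries and the varying shapes $\shape{e\,x}$ of the input paths: one cannot get strict equalities, only paths, and making the reparametrisations coherent — in particular verifying that $e\,x \at (t \mult \shape{e\,x})$ genuinely interpolates between $f\,x$ and $g\,x$ and that the required auxiliary paths witnessing $\varepsilon$ and $\eta$ exist and are stable enough — will require careful bookkeeping with the $\min$, $\mult$ and truncated-subtraction operations and repeated appeals to Lemmas~\ref{lem:patr}, \ref{lem:patc} and the bounded-abstraction identities \eqref{eq:19}--\eqref{eq:21}. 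A secondary subtlety is that we are asserting a quasi-inverse rather than a full equivalence, so we do not need the triangle identity, which simplifies matters; and since identity types are already available (Theorem~\ref{thm:idet}), wherever a direct bounded-abstraction construction gets unwieldy we can fall back on path induction to reduce to the degenerate case.
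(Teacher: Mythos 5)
Your $\funext$ is exactly the paper's: $\funext\,e = \abs{i}{\unit}\lambda x\fun e\,x\at(i\mult\shape{e\,x})$, and your well-typedness check is correct. For $\eta$ your route is genuinely different: the paper writes down the homotopy explicitly as
$\eta\,p = \abs{j}{\unit}\abs{i}{(\unit-j)\mult\shape{p}+j}(p\at(i\mult(\unit-j+j\mult\shape{p})))$,
whereas you invoke $\J$ from Theorem~\ref{thm:idet} (with the motive fibrant by Lemma~\ref{lem:idet}) to reduce to $p = \idp f$ and then bridge the remaining shape mismatch $\src$ vs.~$\unit$ via $\abs{j}{\unit}\abs{i}{j}f$. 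That works, and it actually buys you something: it sidesteps the delicate algebra that the paper has to do for $\eta$. The paper's choice of an explicit formula keeps the argument self-contained and uniform with the $\varepsilon$ case, but your indirect route is valid.

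The genuine gap is in $\varepsilon$, and here path induction is not available -- $e:\prod_{x:\Gamma}(f\,x\pth g\,x)$ is a family of paths, not a single path, so there is no reflexivity case to reduce to. You correctly identify that one must interpolate, for each $x$, between $\abs{i}{\unit}(e\,x\at(i\mult\shape{e\,x}))$ (shape $\unit$) and $e\,x = \abs{i}{\shape{e\,x}}(e\,x\at i)$ (shape $\shape{e\,x}$), while keeping the outer shape a global $\unit$. But you say this will follow from ``careful bookkeeping with $\min$, $\mult$ and truncated subtraction and \eqref{eq:20}--\eqref{eq:21},'' which is not the right toolkit and will not close the argument. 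The interpolation requires, at parameter $j$, a path of shape $u_{j,x}$ evaluated at $i\mult v_{j,x}$, with $u_{j,x} = (\unit-j)+j\mult\shape{e\,x}$ and $v_{j,x}=(\unit-j)\mult\shape{e\,x}+j$; the well-typedness constraint is $e\,x\at(u_{j,x}\mult v_{j,x}) = g\,x$, i.e. $\shape{e\,x}\tot u_{j,x}\mult v_{j,x}$. This follows from the algebraic identity
$u_{j,x}\mult v_{j,x} = \shape{e\,x}+j\mult(\unit-j)\mult(\shape{e\,x}-\unit)^2$
combined with the fact that squares are $\totop\src$ in any ordered commutative ring, given $\src\tot j\tot\unit$. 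This use of the full ring structure (and the positivity of squares in particular) is the essential new ingredient that the theorem exists to exploit, and it is missing from your sketch; without it, the required auxiliary path does not type-check. Lemma~\ref{lem:patc} and $\min$/$\monus$ alone do not produce an interpolant with the correct endpoint behaviour.
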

\begin{proof}
  If $e: \prod_{x:\Gamma}(f\,x\pth g\,x)$, then for all $x:\Gamma$ we
  have $e\,x\at(\src\mult\shape{e\,x}) = e\,x\at\src = f\,x$ and
  $e\,x\at(\unit\mult\shape{e\,x}) = e\,x\at \shape{e\,x} = g\,x$. So there
  is a path $\funext e :f\pth g$ in $\prod_{x:\Gamma}A\,x$ given by:
  \begin{equation}
    \label{eq:27}
    \funext e = \abs{i}{\unit}\lambda x\fun e\,x\at(i\mult\shape{e\,x})  
  \end{equation}
  Note that by \eqref{eq:19} we have for each $x:\Gamma$
  \begin{align*}
    \happly(\funext e)\,x 
    &= (\lambda f\fun f\,x)\pcong
      \abs{i}{\unit}\lambda x \fun e\,x\at(i\mult\shape{e\,x})\\ 
    &= \abs{i}{\unit}(e\,x \at(i\mult\shape{e\,x}))
  \end{align*}
  whereas $e\,x = \abs{i}{\shape{e\,x}}(e\,x\at i)$ by \eqref{eq:21}. So
  to get a path from $\happly(\funext e)\,x$ to $e\,x$ we need to
  interpolate shapes from $\unit$ to $\shape{e\,x}$ while at the same
  time interpolating the argument of $e\,x\at\_$ from $i\mult\shape{e\,x}$
  to $i\mult\unit = i$. So for each $j:\R$ with $\src\tot j\tot\unit$,
  consider
  \[
    u_{j,x} \defeq (\unit - j) + j\mult\shape{e\,x} \quad\text{and}\quad
    v_{j,x} \defeq (\unit - j)\mult\shape{e\,x} + j
  \]
  Calculating with the ring axioms we find that
  $u_{j,x}\mult v_{j,x} =
  \shape{e\,x}+j\mult(\unit-j)\mult(\shape{e\,x}-\unit)^2$.
  Since $\src\tot j$ and $\src\tot \unit-j$ by assumption and since
  squares are always positive in an ordered ring
  ~\cite[VI.19]{BourbakiN:alg}, we have that
  $\shape{e\,x}\tot u_{j,x}\mult v_{j,x}$; hence
  $e\,x\at(u_{j,x}\mult v_{j,x}) = g\,x$. So for all $j:\R$ with
  $\src\tot j\tot\unit$ we have a path
  $\abs{i}{u_{j,x}}(e\,x\at(i\mult v_{j,x})) : f\,x \pth g\,x$ in
  $A\,x$.  When $j=\src$ this path is $\happly(\funext e)\,x$ (because
  $u_{\src,x} = \unit$ and $v_{\src,x} =\shape{e\,x}$); when $j=\unit$
  the path is $e\,x$ (because $u_{\unit,x} = \shape{e\,x}$ and
  $v_{\unit,x} =\unit$). Therefore we can define
 \begin{equation}
   \label{eq:28}
   \varepsilon\,e \defeq \abs{j}{\unit}\lambda x \fun \abs{i}{u_{j,x}} (e\,x
   \at(i\mult v_{j,x}))
 \end{equation}
 to get the desired path $\happly(\funext e) \pth e$ in
 $\prod_{x:\Gamma}(f\,x\pth g\,x)$.  Since for any $p:f\pth g$ it is
 the case that $p=\abs{i}{\shape{p}}(p\at i)$ and
 $\funext(\happly p) =\abs{i}{\unit}(p\at(i\mult\shape{p}))$, a similar
 argument to the one for $\varepsilon$ shows that
 \begin{equation}
   \label{eq:29}
   \eta\,p \defeq \abs{j}{\unit}\abs{i}{(\unit-j)\mult\shape{p}+j}(p\at
   (i\mult(\unit-j+j\mult\shape{p})))
 \end{equation}
 gives a path $p \pth \funext(\happly p)$ in $f\pth g$.
\end{proof}

\section{Universes}
\label{sec:uni}

Let $(\E,\R)$ be an order-ringed topos
(Definition~\ref{def:ordrt}). In this section we show how to construct
Tarski-style universes~\cite[p.~88]{Martin-LoefP:inttt} in the CwF
$\Fib$ constructed from $(\E,\R)$ as in Section~\ref{sec:tap}. To do
so we assume that (the CwF associated with) $\E$ supports
\emph{inductive-recursive} definitions~\cite{DybjerP:genfsi}. This
will be the case if $\E$ is a Grothendieck topos, that is, a category
of $\Set$-valued sheaves on a site, if we assume $\Set$ is a model of
a sufficiently strong set theory;
see~\cite[section~6]{DybjerP:finair}. As well as the topos $\E$, we
also need to assume something about the ordered ring $\R$, namely that
it is a connected object in $\E$; see the definition below. The
examples of order-ringed toposes that we give in Section~\ref{sec:mod}
have these properties. For simplicity, we confine attention to
universes containing a type of Booleans and closed under taking
dependent function and identity types (given by Moore paths); other
typing constructs can be dealt with in the same way.

Let the object $\UU\in\E$ and the family $\TT\in\E(\UU)$ be defined
simultaneously by induction-recursion in $\E$ so that $\UU$ has
inductive constructors
\begin{equation}
  \label{eq:49}
  \code{bool} : \UU
  \qquad
  \code{pi} : \textstyle\prod_{u:\UU}(\TT\,u\fun\UU)\fun\UU
  \qquad
  \code{eq} : \textstyle\prod_{u:\UU}\TT\,u\fun\TT\,u \fun \UU
\end{equation}
and $\TT$ satisfies the following recursion equations
\begin{align}
  &\TT\,\code{bool} = 1+1\\
  &(\forall u:\UU)(\forall f: \TT\,u\fun\UU)\; \TT(\code{pi}\,u\,f) = 
    \textstyle\prod_{x:\TT\,u}\TT(f\,x)\\
  &(\forall u:\UU)(\forall x,y:\TT\,u)\; \TT(\code{eq}\,u\,x\,y) =
    x\pth y\label{eq:50}
\end{align}
Here $x\pth y$ is the Moore path object (Definition~\ref{def:moopo})
and $1+1$ is the coproduct of the terminal object $1\in\E$ with
itself---this gives an object of Booleans in $\E$ with elements
\begin{equation}
  \label{eq:13}
  \true : 1+1 \qquad \false : 1+1
\end{equation}

Recall from the proof of Theorem~\ref{thm:empub} that any object of
$\E$, and in particular $\UU$, has a tap fibration structure when
regarded as a family over $1$. So the main task is to show that there
is a tap fibration structure for the family $\TT\in\E(\UU)$. The way
that we will construct this structure agrees with the tap fibration
structure for Boolean, $\Pi$-{} and identity types (as in
Theorems~\ref{thm:empub}, \ref{thm:sptyp} and \ref{thm:idet}
respectively) when it is re-indexed along the coding functions
$\code{bool}$, $\code{pi}$ and $\code{id}$. In this way we get a
Tarski-style universe $\TT\in\Fib(\UU)$ in $\Fib$ containing a type of
Booleans and closed under taking dependent function and identity
types.

Recalling the definition of a tap structure (Definition~\ref{def:tapf}),
given a Moore path $p:u_0\pth u_1$ in $\UU$ we wish to construct a
function $\trpt{p}:\TT\,u_0\fun\TT\,u_1$ which is the identity when
$p$ is degenerate. To do so we recurse over the structure of $u_0:\UU$,
which is of the form $\code{bool}$, or $\code{pi}\,v_0\,f_0$ (for some
unique $v_0$ and $f_0$), or $\code{eq}\,v_0\,x_0\,y_0$ (for some
unique $v_0$, $x_0$ and $y_0$).  In each case we would like the whole
path $p$ to remain within whichever constructor form $u_0$ has, so
that previously constructed transport functions can be combined
appropriately using the recipes in the proofs of
Theorems~\ref{thm:empub}, \ref{thm:sptyp} and \ref{thm:idet}. This
property of paths in $\UU$ holds if $\R$ is connected in the following
sense. 

\begin{defi}[{Connected objects in a topos}]
  \label{def:conot}
  For each object $X$ of the topos $\E$, consider the morphism
  $\K:1+1 \morphism (X \fun {1+1})$ from the Booleans $1+1$ into the
  exponential of $1+1$ by $X$ which is the transpose of the first
  projection $\pi_1:(1+1)\times X\morphism{1+1}$; thus $\K$ sends a
  Boolean $b:1+1$ to the constant function with value $b$. We will say
  that $X$ is \emph{connected} if $\K$ is an isomorphism.
\end{defi}

If $(\E,\R)$ is an order-ringed topos for which $\R$ is connected,
then given a predicate $\varphi:\I\fun\Omega$ that is decidable
($(\forall i:\I)\;\varphi\,i \disj \neg(\varphi\,i)$), consider the
function $f:\R\fun{1+1}$ well-defined by
\[
  f\,x =
  \begin{cases}
    \true
    &\text{if $({x\tot \src} \conj \varphi\,\src)
      \;\disj\; ({\src\tot x} \conj \varphi\,x)$}\\
    \false
    &\text{if $({x\tot \src} \conj \neg\varphi\,\src)
      \;\disj\; ({\src\tot x} \conj \neg\varphi\,x)$}
  \end{cases}
  \qquad(x:\R)
\]
Since $\R$ is connected, either $f=\K\,\true$, or $f=\K\,\false$. If
$\varphi\,\src$ holds, then $f\,\src=\true$ and therefore we cannot
have $f=\K\,\false$; so in this case we must have $f=\K\,\true$. Thus
$\R$ being connected implies that the half-line $\I$ satisfies
\begin{equation}
  \label{eq:15}
  (\forall \varphi:\I\fun\Omega)\;
  ((\forall i:\I)\;\varphi\,i \disj \neg(\varphi\,i)) \;\imp\;
  \varphi\,\src \imp (\forall i:\I)\; \varphi\,i
\end{equation}
We use this property to construct the required tap fibration structure
for $\TT$.

\begin{thm}[{Universes}]
  \label{thm:uni}
  Given an order-ringed topos $(\E,\R)$, 
  assuming the topos $\E$ supports the inductive-recursive definition
  \eqref{eq:49}--\eqref{eq:50} and that $\R$ is connected, then
  $\TT\in\E(\UU)$ has a tap fibration structure that make it into a
  Tarski-style universe in $\Fib$ containing a Boolean type
  and closed under dependent function and identity types. 
\end{thm}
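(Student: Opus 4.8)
The theorem has three components: equipping the family $\TT\in\E(\UU)$ with a tap fibration structure (Definition~\ref{def:tapf}), checking that the resulting fibration $\TT\in\Fib(\UU)$ is a Tarski-style universe with the advertised closure properties, and checking stability under re-indexing. Since $\UU$ itself, regarded as a family over $1$, is already a fibration (Theorem~\ref{thm:empub}), and since re-indexing of $\TT$ along a classifying morphism $\Gamma\morphism\UU$ is handled automatically by the congruence operation~\eqref{eq:7}, the real work is the construction of the tap structure on $\TT$, and showing it restricts correctly along the three coding functions.

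The plan is to define the transport map $\trpt{p}:\TT\,u_0\fun\TT\,u_1$, for a Moore path $p:u_0\pth u_1$ in $\UU$, by well-founded recursion on the code $u_0:\UU$ (recall $\UU$ is an inductively--recursively defined object, so its subterm ordering is well-founded), casing on which of the constructors in~\eqref{eq:49} produced $u_0$. The inductive definition~\eqref{eq:49}--\eqref{eq:50} exhibits $\UU$ as a decidable coproduct of three ``constructor components'': $\UU_{\code{bool}}\cong 1$, $\UU_{\code{pi}}\cong\sum_{u:\UU}(\TT\,u\fun\UU)$ and $\UU_{\code{eq}}\cong\sum_{u:\UU}\TT\,u\times\TT\,u$. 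The essential use of the connectedness hypothesis on $\R$ (Definition~\ref{def:conot}) is through property~\eqref{eq:15}: for each component the predicate $\lambda i:\I\fun(p\at i\text{ lies in that component})$ is decidable, so if it holds at $i=\src$ it holds for all $i:\I$; hence a Moore path issuing from $u_0$ stays entirely within whichever component contains $u_0$ and, that component being a decidable subobject, is the congruence image of a Moore path in it. In particular $u_1$ has the same head constructor as $u_0$.

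Given this, the three cases proceed as follows. If $u_0=\code{bool}$ then $\UU_{\code{bool}}\cong 1$ forces $u_1=\code{bool}$, so $\TT\,u_0=\TT\,u_1=1+1$ and we set $\trpt{p}=\id$, which is the tap structure of Theorem~\ref{thm:empub}. If $u_0=\code{pi}\,v_0\,f_0$ then $p$ lifts to a path $p'$ in $\sum_{u:\UU}(\TT\,u\fun\UU)$ from $(v_0,f_0)$ to some $(v_1,f_1)$; projecting with $\fst$ gives a path $\fst\pcong p':v_0\pth v_1$ in $\UU$, and since $v_0$ and each $f_0\,x$ ($x:\TT\,v_0$) are subterms of $u_0$ the recursion supplies transport operations at these codes, which we combine precisely as in the $\Pi$-type recipe~\eqref{eq:17} of Theorem~\ref{thm:sptyp} (using path lifting, Lemma~\ref{lem:patl}, and path reversal, Lemma~\ref{lem:patr}) to build $\trpt{p}:(\prod_{x:\TT\,v_0}\TT(f_0\,x))\fun(\prod_{x:\TT\,v_1}\TT(f_1\,x))$. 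If $u_0=\code{eq}\,v_0\,x_0\,y_0$ then $p$ likewise lifts to a path in $\sum_{u:\UU}\TT\,u\times\TT\,u$; decomposing it with the $\snd$-operation of Lemma~\ref{lem:star} (whose base code is the subterm $v_0$) yields the ingredients of the identity-type transport formula~\eqref{eq:48} from Lemma~\ref{lem:idet}, and we take $\trpt{p}$ to be that composite of Moore paths in $\TT\,v_1$. The tap unit law $\trpt{(\idp u)}a=a$ is then checked case by case by a second appeal to~\eqref{eq:15} (the degenerate path stays in its component) together with the unit-law computations already performed in Theorems~\ref{thm:empub} and~\ref{thm:sptyp} and Lemma~\ref{lem:idet}; stability under re-indexing is immediate because every operation used---congruence, lifting, reversal, contraction, and $\snd$---is stable. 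Finally, by the very choice of recipes, the transport on $\TT$ re-indexed along $\code{bool}$, $\code{pi}$ and $\code{eq}$ coincides with the tap structures for the Boolean, $\Pi$- and identity types, which is exactly what it means for $\TT\in\Fib(\UU)$ to be a Tarski universe containing a Boolean type and closed under dependent function and identity types.

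The point requiring care---and the main obstacle---is the well-foundedness of this recursion. The recipes~\eqref{eq:17} and~\eqref{eq:48} and the decomposition of Lemma~\ref{lem:star} do not merely transport along the projected base path: they transport elements of $\TT\,w$ along auxiliary Moore paths (reversals, tails $\from{i}{(\_)}$, lifts) whose end points and intermediate points $w\at i$ are a priori arbitrary codes in $\UU$, not subterms of $u_0$, which would break a naive structural recursion. This is reconciled in two ways. First, property~\eqref{eq:15} is invoked at \emph{every} level of the recursion, not only at the top: a path out of a subterm $v_0$ of $u_0$ again stays within $v_0$'s constructor component, so the head constructors of its intermediate points are controlled by proper subterms of $v_0$, keeping the recursion grounded in the subterm ordering of $\UU$. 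Second, one should strengthen the recursive datum so that the operation produced at a code $v$ provides transport along any Moore path issuing from $v$ \emph{in both directions} (equivalently, transport between any two points on such a path), so that a transport out of $\TT$ at an intermediate or end point of a path out of $v_0$ can always be routed through $\TT\,v_0$ itself, and hence through the recursive datum legitimately available for the subterm $v_0$. Arranging the recursion so that these two mechanisms are transparent---for instance by recursing over $u_0$ to produce, all at once, the complete bidirectional transport operation for the connected component of $u_0$---is the technical heart of the argument.
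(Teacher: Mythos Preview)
Your proposal is correct and follows essentially the same approach as the paper: case analysis on the head constructor of $u_0$, using connectedness via property~\eqref{eq:15} to confine the whole path to one constructor component, then applying the transport recipes~\eqref{eq:17} and~\eqref{eq:48} from the earlier results. The paper is terser about the well-foundedness concern you highlight in your last paragraph, simply invoking ``recursion and induction on the size (number of nested constructors) of elements of $\UU$'' and writing out the explicit auxiliary paths in each case, rather than your proposed strengthening of the recursive datum to bidirectional transport over the connected component; your treatment is more careful on this point than the paper itself.
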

\begin{proof}
  We can construct a transport function
  \begin{equation}
    \label{eq:26}
    (\forall u_0,u_1:\UU)(\forall p:u_0\pth u_1)(\forall a:\TT\,u_0)\;
    \trpt{p}a:\TT\,u_1 
  \end{equation}
  and prove
  \begin{equation}
    \label{eq:34}
    (\forall u:\UU)(\forall a:\TT\,u)\; \trpt{(\idp u)}a = a
  \end{equation}
  simultaneously by recursion and induction on the size (number of
  nested constructors) of elements of $\UU$.
  Given
  $(f,i):f\,\src\pth f\,i$ and $a:\TT(f\,\src)$ consider the structure
  of $f\,\src$:
  
  \paragraph{\textbf{Case $f\,\src = \code{bool}$}} Let
  $\varphi:\I\fun\Omega$ be $\varphi\,j\defeq (f j =
  \code{bool})$. Since $ \UU$ is the disjoint union of the images of the
  constructors $\code{bool}$, $\code{pi}$ and $\code{eq}$, we have that
  $\varphi$ is decidable; and it holds when $j=\src$. Therefore by
  \eqref{eq:15} we have $f\,i =\code{bool}$. So
  $a:\TT(f\,\src) = \TT\,\code{bool} = \TT(f\,i)$ and in this case we
  can define $\trpt{(f,i)}a \defeq a$.
  
  \paragraph{\textbf{Case $f\,\src = \code{pi}\,v_0\,g_0$, for some
      $v_0,g_0$}} Letting $\varphi:\I\fun\Omega$ be
  \[
    \varphi\,j\defeq (\exists v:\UU,g:\TT\,v\fun\UU)\; f\,j =
    \code{pi}\,v\,g
  \]
  once again this is a decidable predicate that holds at
  $\src$. Therefore by \eqref{eq:15} and the fact that $\code{pi}$ is
  injective, we have that there are functions $v:\I\fun\UU$ and
  $g:\prod_{j:\I}\TT(v\,j)\fun\UU$ with
  \[
    (\forall j:\I)\; f\,j =
    \code{pi}\,(v\,j)(g\,j)   \;\conj\; (v,i):v_0\pth v\,i \;\conj\;
    g\,\src=g_0 \;\conj\; (\forall j\totop i)\; g\,j=g\;i 
  \]
  Note that the type of $a$ is
  $\TT(f\,\src) = \prod_{x:\TT\,v_0}\TT(g_0\,x) =
  \prod_{x:\TT(v\,\src)}\TT(g\,\src\,x)$. To transport this to an
  element of $\TT(f\,i) = \prod_{x:\TT(v\,i)}\TT(g\,i\,x)$ we can use
  the construction \eqref{eq:17} from the proof of
  Theorem~\ref{thm:sptyp}. More explicitly, for each $x:\TT(v\,i)$ and
  $k:\I$ we have a path
  $\abs{j}{i\monus k}(v(i\monus j)) : v\,i\pth v\,k$ and by recursion we
  can form
  \[
    \overline{x}\,k \defeq \trpt{(\abs{j}{i\monus k}(v(i\monus j)))}x :
    \TT(v\,k)
  \]
  Hence we get $\overline{g}:\I\fun\UU$ and $a_0 :
  \TT(\overline{g}\,\src)$ by defining
  \begin{align*}
    \overline{g}\,k &\defeq g\,k\,(\overline{x}\,k)\\
    a_0 &\defeq a(\overline{x}\,\src)
  \end{align*}
  By the induction hypothesis $\overline{x}$ satisfies
  $(\forall k\totop i)\;\overline{x}\,k = x$ and hence
  $(\overline{g},i): \overline{g}\,\src \pth g\,i\,x$. So by recursion
  once again, we get $\trpt{(\overline{g},i)}a_0 :
  \TT(g\,i\,x)$; and by induction hypothesis, this is equal to $a\,x$ in
  the case $i=\src$. Abstracting over $x$ gives the required element
  $\trpt{(f,i)}a$ of $\prod_{x:\TT(v\,i)}\TT(g\,i\,x)$. 
  
  \paragraph{\textbf{Case $f\,\src = \code{eq}\,v_0\,x_0\,y_0$, for some
      $v_0,x_0,y_0$}} The argument is similar to the previous case, but
  using the decidable predicate
  $\varphi\,j \defeq (\exists v:\UU,x:\TT\,u, y:\TT\,u)\; f\,j =
  \code{eq}\,v\,x\,y$ and the construction \eqref{eq:48} from the proof
  of Lemma~\ref{lem:idet}.
\end{proof}

\begin{rem}[{Univalence}]
  An element $u$ of the universe $\UU$ is a code for the corresponding
  type $\TT\,u$. By the above theorem, every Moore path
  $p: u_0\pth u_1$ in $\UU$ induces a transport function
  $\trpt{p}:\TT\,u_0\fun\TT\,u_1$. Because Moore paths give identity
  types in $\Fib$, these functions are
  \emph{equivalences}~\cite[Chapter~4]{HoTT}, the inverse equivalence
  being given by transport along the reverse path $\rev p$. Were $\TT$
  to satisfy Voevodsky's \emph{univalence
    axiom}~\cite[Section~2.10]{HoTT}, every equivalence between
  $\TT\,u_0$ and $\TT\,u_1$ would have to be of the form $\trpt{p}$
  for some path $p: u_0\pth u_1$. This cannot be the case, because we
  have seen that connectedness of $\R$ implies that given
  $p: u_0\pth u_1$, the elements $u_0$ and $u_1$ must have equal
  outermost constructor form; whereas it is quite possible for $1+1$
  to be equivalent (indeed, isomorphic) to, for example, a $\Pi$-type
  named by a code in $\UU$.  So the universes constructed in this
  section do not satisfy the interesting form of extensionality
  embodied by the univalence axiom. We return to this point in the
  Conclusion.
\end{rem}

\section{Models}
\label{sec:mod}

In the previous sections we have seen how any order-ringed topos
$(\E,\R)$ (Definition~\ref{def:ordrt}) gives rise to a model of
Martin-L\"of type theory with intensional identity types given by
Moore paths on $\R$. If $\R$ is trivial, that is, satisfies
$\src=\unit$, then existence of a Moore path just coincides with
extensional equality in the topos $\E$. If the object $\R$ is
non-trivial, but has decidable equality in $\E$ (for example, if it is
the object of integers, or of rationals), then there is a path
$\abs{i}{\unit}(\mathtt{if\ } i=\src \mathtt{\ then\ } \true \mathtt{\
  else\ } \false) : \true \pth \false$ in the object $1+1$ of Booleans
and so in this case the model of type theory we get is logically
degenerate. Therefore, when searching for examples of order-ringed
toposes, we should at least look for ones with $\R$ non-trivial and
not decidable. We first give a simple example of such an order-ringed
topos, where the underlying topos is a presheaf category. Then we give
a more sophisticated example, using a sheaf topos and for which the
associated model of type theory has identity types that are not
necessarily truncated at any level; in other words, ones in which
iterated identity types $\Id_A,\Id_{\Id_A},\Id_{\Id_{\Id_A}},\ldots$
can be homotopically non-trivial for any level of iteration.

\subsection{A presheaf model}

Consider the category whose objects are ordered rings in the topos
$\Set$ (sets and functions) and whose morphisms are ordered ring
homomorphisms, that is, functions preserving the order $\tot$ and the
ring operations $\src,\unit,{+},{\mult}$. Let $\C$ be a small full
subcategory of this category. For the purposes of this example it is
not important which ordered rings $\C$ contains, so long as it
contains the terminal ordered ring $1$ (which has one element
$\src=\unit$) and a non-trivial ordered ring (one for which
$\src\not=\unit$); for definiteness let us assume that $\C$ contains
the reals $\RR$ with the usual order and ring operations.

\subsubsection{The topos}
\label{subsubsec:top}

We use the topos $\Set^\C$ of covariant presheaves on $\C$. Thus the
objects of $\Set^\C$ are functors $\C\morphism \Set$ and the morphisms
are natural transformations between such functors. Let
$\Delta:\Set\morphism\Set^\C$ denoted the functor assigning to each
set $S$ the constant presheaf $\Delta\,S:\C\morphism \Set$, whose
value at each $X\in\C$.  is $\Delta\,S(X) = S$. $\Delta$ is the
inverse image part of the unique geometric morphism from the topos
$\Set^\C$ to $\Set$; its direct image part, the right adjoint to
$\Delta$, is the global sections functor
$\Set^\C(1,\_):\Set^\C\morphism \Set$.  Below we will also need to use
the fact that $\Delta$ has a left adjoint
\begin{align}
  &\pi_0:\Set^\C\morphism \Set\\
  &\pi_0(F) \defeq F(1)\notag
\end{align}
(The adjunction $\pi_0\dashv \Delta$  follows from the fact that
$1$ is a terminal object in $\C$.)

\subsubsection{The ordered commutative ring object $\R$}

Let $\R:\C\morphism\Set$ denote the forgetful functor sending each
ordered ring in $\C$ to its underlying set. As an object of $\Set^\C$,
$\R$ has the structure of an ordered ring object:
\begin{itemize}
  
\item ${\tot}\mono\R\times\R$ is the sub-presheaf of $\R\times \R$
  whose value each object $X\in\C$ is the subset
  ${\tot}(X)\subseteq \R(X)\times \R(X) = X\times X$ given by the
  order on $X$:
  \begin{align}
    {\tot}(X) \defeq \{(x,y)\in X\times X \mid x\tot y\}
  \end{align}
  Note that these subsets do form a subpresheaf, because each morphism
  in $\C$ is in particular an order-preserving function. Furthermore,
  ${\tot}\mono\R\times\R$ is a total order \eqref{eq:38}--\eqref{eq:2}
  in the topos $\Set^\C$, because disjunction in a presheaf topos is
  computed component-wise and each ${\tot}(X)$ is a total order on
  $X$.
  
\item The ring structure on $\R$ is given component-wise by the ring
  structure on each $X\in\C$. For example, the addition morphism
  ${+}:\R\times \R \morphism \R$ has component at $X\in\C$ given by
  addition in $X$: ${+}_X(x,y) = x + y$; these are the components of a
  morphism in $\Set^\C$ because they are natural in $X$, since each
  morphism $\theta$ in $\C$ satisfies
  $\theta(x+y) = \theta\,x + \theta\,y$. Similarly, this structure on
  $\R$ satisfies axioms \eqref{eq:9}--\eqref{eq:45} because each
  $\R(X)= X$ is an ordered ring.
\end{itemize}

\subsubsection{$\R$ is connected}

To apply the results of Section~\ref{sec:uni}, we need to verify that
$\R$ is a connected object in $\Set^\C$
(Definition~\ref{def:conot}). Since the object of Booleans $1+1$ in
the presheaf topos $\Set^\C$ is isomorphic to the constant functor
$\Delta\,2:\C\morphism\Set$, where $2=\{0,1\}$ is a two-element set,
we have to show that $\K:\Delta\,2 \morphism (\R\fun\Delta\,2)$ is an
isomorphism in $\Set^\C$. To see this, we just need to show that
$\K_X:\Delta\,2(X)\morphism(\R\fun\Delta\,2)(X)$ is a bijection for
each $X\in\C$. Letting $\Y:\C^\op\morphism \Set^\C$ denote the Yoneda
embedding and using the adjunction $\pi_0\dashv\Delta$ mentioned in
Sec.~\ref{subsubsec:top}, there are bijections
\begin{align*}
  \Delta\,2(X)
  &= 2
  &&\text{(definition of $\Delta$)}\\
  &\bij\Set(\C(X,1)\times 1, 2)
  &&\text{(since $1$ is terminal in $\C$)}\\
  &= \Set((\Y X\times \R)(1),2)
  &&\text{(definition of $\Y X\times \R$)}\\ 
  &= \Set(\pi_0(\Y X\times \R),2)
  &&\text{(definition of $\pi_0$)}\\
  &\bij \Set^\C(\Y X\times \R, \Delta\,2)
  &&\text{($\pi_0$ left adjoint to $\Delta$)}\\
  &\bij \Set^\C(\Y X,\R\fun\Delta\,2)
  &&\text{(universal property of exponential)}\\
  &\bij (\R\fun\Delta\,2)(X)
  &&\text{(Yoneda Lemma~\cite[III.2]{MacLaneS:catwm})}
\end{align*}
and one can check that their composition is $\K_X$.

So $(\Set^\C,\R)$ is an order-ringed topos from which we can construct
a model $\Fib$ of intensional Martin-L\"of Type Theory with universes
as in the previous sections. Since we assumed that $\C$ contains the
non-trivial ordered ring $\RR$, it is not the case that $\R$ is
trivial; that is, the sentence $\src=\unit$ is not satisfied by $\R$
(because it is not satisfied at component $X=\RR$).\footnote{On the
  other hand, since we assume $\C$ also contains the trivial ring $1$,
  neither is it the case that the sentence $\neg(\src=\unit)$ is
  satisfied by $\R$. Note that $\Set^\C$ is not a Boolean topos -- it
  does not satisfy the Law of Excluded Middle.} More than this, the
associated model of Type Theory built from $(\Set^\C,\R)$ as in the
previous sections is not logically trivial: there is no proof in it of
$\Id_{\Bool}\true\,\false$ (so in particular, $\R$ cannot be a
decidable object of $\Set^\C$). To see this, note that giving a proof
of $\Id_{\Bool}\true\,\false$ in the CwF $\Fib$ associated with
$(\Set^\C,\R)$ is the same as giving an $\R$-based Moore path from
$\true:1\morphism \Delta\,2$ to $\false:1\morphism \Delta\,2$. There
is no such path because we saw above that
${\R\fun\Delta\,2} \bij \Delta\,2$ and hence every path in $\Delta\,2$
is constant.

\subsection{A gros topos model}
\label{sec:grotm}

Being logically non-trivial is rather a weak condition for models of
type theory with intensional identity types. A more interesting one is
that the model contains types $A$ whose iterated identity types
$\Id_A,\Id_{\Id_A},\Id_{\Id_{\Id_A}},\ldots$ are all non-trivial (not
isomorphic to the unit type).  An interesting way of demonstrating
that for the model associated with an order-ringed topos $(\E,\R)$ is
to show that the homotopy types of a rich collection of topological
spaces (including all the $n$-dimensional spheres, say) are faithfully
represented by the internal, $\R$-based notion of homotopy on a
corresponding collection of objects of the topos $\E$. We give one
such order-ringed topos in this section.

\subsubsection{The topos}

We use a topos of sheaves $\Sh(\T,\Cov)$~\cite[III]{MacLaneS:shegl}
for the following site $(\T,\Cov)$. Let $\Top$ denote the category of
Hausdorff topological spaces and continuous functions. We take the
small category $\T$ to be the least full subcategory of $\Top$
containing the reals $\RR$ and closed under the following operations:
\begin{itemize}

\item[(a)] If $X,Y\in\T$, then $\T$ contains the product space
  $X\times Y$.

\item[(b)] If $X\in\T$ and $C\subseteq X$ is a closed subset, then $C$
  (with the subspace topology) is in $\T$.

\item[(c)] If $X,Y\in\T$ and $X$ is locally compact, then $\T$
  contains the exponential space $Y^X$ (the set of continuous
  functions from $X$ to $Y$ endowed with the compact-open topology).

\end{itemize}
Since the spaces in $\T$ are Hausdorff, equalizers of continuous
functions give closed subspaces and hence by (a) and (b) we have that
$\T$ is closed under taking finite limits in $\Top$. Hence $\T$
contains $\pos{\RR}$ (as well as $[0,1]$ and all the spheres $S^n$ for
any $n\in\N$). Then by (c) we have that $\T$ is closed under taking
Moore path spaces (with their usual topology):
\begin{equation}
  \label{eq:30}
  X\in\T \;\imp\; MX \defeq \{(f,r)\in
  X^{\pos{\RR}}\times\pos{\RR} \mid (\forall 
  r'\geq r)\; f\,r' = f\,r\} \in \T
\end{equation}
We define a coverage~\cite[Definition~A2.1.9]{JohnstonePT:skeett} on
$\T$ as follows.  Following Dyer and Eilenberg~\cite{DyerE:glofs} we
say that a set $S$ of closed subsets of a topological space $X$ is a
\emph{local cover} if for each $x\in X$ there is a finite subset
$\{C_1,\ldots,C_n\}\subseteq S$ with $x$ in the interior of
$C_1\cup\cdots\cup C_n$. Of course every finite cover of $X$ by closed
subsets is trivially a local cover. Note that if $f:Y\morphism X$ in
$\T$ and $S$ is a local cover of $X$, then
$f^{-1}S \defeq \{f^{-1}C \mid C\in S\}$ is a local cover of
$Y$. Hence we get a coverage on $\T$ with
$\Cov(X) \defeq \{S\mid \text{$S$ is a local cover of $X$}\}$ for each
$X\in\T$.

Given a functor $F:\T^\op\morphism \Set$, if $C\subseteq X$ is a
closed subspace of a space in $\T$, then for each $x\in F\,X$ we
will just write $x|_C$ for the element $F\,i\,x \in F\,C$,
where the $\T$-morphism $i:C\morphism X$ is the inclusion function.
Recall that $F$ is a sheaf with respect to $\Cov$ iff for all
$X\in\T$, $S\in\Cov(X)$ and all $S$-indexed families
$(x_C\in F\,C \mid C\in S)$ satisfying
$(\forall C,D\in S)\; x_C|_{C\cap D} = x_D|_{C\cap D}$, there is a
unique $x\in F\,X$ with $(\forall C\in S)\; x_C = x|_C$. The topos
$\Sh(\T,\Cov)$ is by definition the full subcategory of the functor
category $\Set^{\T^\op}$ whose objects are sheaves.

\subsubsection{The ordered commutative ring object $\R$}

Let $\Y:\T\morphism\Set^{\T^\op}$ denote the Yoneda embedding for the
small category $\T$. Because elements of $\Cov(X)$ are local covers of
$X\in\T$, for each $S\in\Cov(X)$ if we have a family of continuous
functions $f_C:C\morphism X'$ for each $C\in S$ that agree where they
overlap ($f_C|_{C\cap D}= f_D|_{C\cap D}$), then the unique function
$f:X\morphism X'$ that agrees with each of them ($f_C= f|_C$) is
necessarily continuous. Hence each representable presheaf
$\Y\,X'=\T(\_\mathbin{,}X')$ is a sheaf (in such a case one says that
the coverage $\Cov$ is \emph{sub-canonical}) and the Yoneda embedding
gives a functor $\Y:\T\morphism\Sh(\T,\Cov)$.

The axioms for \emph{partially} ordered commutative rings are those in
Fig.~\ref{fig:ordcra} except for \eqref{eq:2}. They make sense in any
category with finite limits once we have an interpretation of the
binary operation $\tot$, the constants $\src,\unit$ and the operations
${+},{\minus},{\,\mult\,}$; and satisfaction of those axioms is
preserved by functors that preserve finite limits.  Since
$\{(x,y)\in \RR\times\RR\mid x\leq y\}$ is a closed subset of
$\RR\times\RR$ and the usual ring operations on $\RR$ are continuous,
it follows that $\RR$ is a partially ordered commutative ring object
in the finitely complete category $\T$. Then since $\Y$ preserves
finite limits, the representable sheaf $\R \defeq \Y\RR$ is a
partially ordered commutative ring object in the topos
$\Sh(\T,\Cov)$. In fact it is totally ordered, that is, satisfies
\eqref{eq:2}. This is simply because we have a local cover
$\{\{(x,y)\mid x\leq y\},\{(x,y)\mid y\leq x\}\}
\in\Cov(\RR\times\RR)$.\footnote{This fact is the motivation for
  considering a site based on covers by closed subsets rather than the
  more familiar case of open covers used for Giraud's gros
  topos~\cite[IV, 2.5]{SGA4}. However, as Spitters has pointed out
  [private communication], in view of
  \cite[Theorem~8.1]{JohnstonePT:ontopt} we could have used the reals
  in Johnstone's \emph{topological topos} instead of $\Sh(\T,\Cov)$ in
  this section.}

\subsubsection{$\R$ is connected}

To apply the results of Section~\ref{sec:uni}, we need to verify that
$\R$ is a connected object in $\Sh(\T,\Cov)$
(Definition~\ref{def:conot}). First note that $\T$ contains the
discrete spaces $\emptyset$, $1$ and $2=\{0,1\}$. Since $\emptyset$ is
covered by the empty family of closed sets and $2$ is covered by the
two closed inclusions $\{0\}\hookrightarrow 2 \hookleftarrow\{1\}$,
for any sheaf $F\in\Sh(\T,\Cov)$ we have that $F(\emptyset)\bij 1$ and
then $F(2)\bij F(1)\times F(1)$. So by the Yoneda Lemma
\[
  \Sh(\T,\Cov)(\Y 2,F) \bij F(2) \bij F(1)\times F(1)
  \bij\Sh(\T,\Cov)(\Y1+\Y1,F)\bij \Sh(\T,\Cov)(1+1,F)
\]
naturally in $F$. So the object of Booleans in $\Sh(\T,\Cov)$ is
representable by $2\in\T$. Thus to see that $\R$ is connected, it
suffices to show that $\K:\Y 2 \morphism (\R\fun\Y 2)$ is an
isomorphism in $\Sh(\T,\Cov)$. But $\R=\Y\RR$ is also representable
and $\Y$ preserves exponentials. It follows that
$\K:\Y 2 \morphism (\R\fun\Y 2)$ is the image under $\Y$ of the
continuous function $2\morphism 2^\RR$ given by taking the exponential
transpose of $\pi_1:2\times \RR \morphism 2$ in $\T$. But
$2\morphism 2^\RR$ is an isomorphism in $\T$ because $\RR$ is
topologically connected; and hence so is
$\K:\Y 2 \morphism (\R\fun\Y 2)$ in $\Sh(\T,\Cov)$.

\subsubsection{Homotopy types in $\Sh(\T,\Cov)$}

From the order-ringed topos $(\Sh(\T,\Cov), \Y\RR)$ we get a CwF
$\Fib$ modelling Martin-L\"of Type Theory with universes and with
intensional identity types $\Id_A$ given by Moore paths.  Consider the
congruence on the category $\Sh(\T,\Cov)$ that identifies morphisms
$\gamma,\delta:B\morphism A$ when there is a global section of
$\Id_{B\fun A}\gamma\,\delta$. Let $\Ho(\Sh(\T,\Cov))$ denote the
associated quotient category.

By Theorem~\ref{thm:funext}, two
morphisms $\gamma,\delta:B\morphism A$ are identified in
$\Ho(\Sh(\T,\Cov))$ iff $\prod_{x:B}(\gamma\,x\pth \delta\,x)$ has a
global section. This is equivalent to requiring the existence of a
morphism $H:B\morphism\Path A$ in $\Sh(\T,\Cov)$ satisfying
$\source\comp H = \gamma$ and $\target\comp H = \delta$, where
\begin{gather}
  \label{eq:24}
  \Path A \defeq \{(f,i):(\I\fun A)\times\I \mid (\forall
  j\totop i)\; f\,j= f\,i\}\\
  \source, \target : \Path A\morphism A
  \qquad \source(f,i) \defeq f\,\src 
  \qquad \target(f,i) \defeq f\,i \notag
\end{gather}
is the total object of the family of Moore path objects \eqref{eq:3}
with associated source and target morphisms. Note that since
$\{(x,y)\in\RR\times\RR \mid x\leq y\}$ is a closed subset of
$\RR\times\RR$, it is an object of $\T$ and the order relation on $\R$
in $\Sh(\T,\Cov)$ is given by the representable
$\Y \{(x,y)\mid x\leq y\}\mono \Y(\RR\times\RR)\bij \R\times \R$.  It
follows that the positive cone of $\R$ is also a representable sheaf:
$\I\iso\Y(\pos{\RR})$. Recall that, as well as preserving finite
limits, the Yoneda embedding preserves any exponentials that happen to
exist. It follows that for each representable sheaf $\Y X$ its total
path object is representable by the Moore path space of
$X$~\eqref{eq:30}: $\Path(\Y X) \iso \Y(M X)$; and under this
isomorphism the source and target morphisms $\source,\target$
correspond to the representable morphisms induced by the usual source
and target functions for $M X$.  Since $\Y$ is full and faithful, it
follows that for two continuous functions $f,g:X\morphism X'$ in $\T$,
the morphisms $\Y f$ and $\Y g$ get identified in $\Ho(\Sh(\T,\Cov))$
iff $f$ and $g$ are homotopic in the classical sense. Thus $\Y$
induces a full and faithful embedding of the category $\Ho(\T)$ of
homotopy types of spaces in $\T$ into $\Ho(\Sh(\T,\Cov))$. Since $\T$
contains all the spheres $S^n$, we deduce that identity types in this
particular $\Fib$ are not necessarily truncated at any level of
iteration.

\section{Related Work}
\label{sec:relw}

The classical topological notion of Moore path is a standard, if
somewhat niche topic within homotopy theory. The Schedule Theorem of
Dyer and Eilenberg~\cite{DyerE:glofs} for globalising Hurewicz
fibrations is a nice example of their usefulness. They have been used
in connection with higher-dimensional category theory by Kapranov and
Voevodsky~\cite{VoevodskyV:infght} and by
Brown~\cite{BrownR:moohsf}. 

Although our use of constructive algebra within toposes to make models
of intensional type theory appears to be new, we are not the only ones
to consider using some form of path with strictly unitary and
associative composition to model identity types with a judgemental
computation rule.  Van~den~Berg and Garner~\cite{VanDenBergB:topsmi}
use topological Moore paths and a simplicial version of them to get
instances of their notion of \emph{path object category} for modelling
identity types.  The results of Sections~\ref{sec:moopt} and
\ref{sec:tap} show that any ordered abelian group in a topos induces a
path object category structure on that topos; and since the notion of
fibration we use (Definition~\ref{def:tapf}) is closely related to the
one used in \cite{VanDenBergB:topsmi} (see Proposition~6.1.5 of that
paper), one can get alternative, more abstract categorical proofs of
Theorems~\ref{thm:sptyp} and \ref{thm:idet} from the work of
Van~den~Berg and Garner. However, the concrete calculations in the
internal language that we give are quite simple by comparison; and
this approach proves its worth in Section~\ref{sec:fune}, whose
results on obtaining function extensionality from the ordered ring
structure are new.

The PhD thesis of North~\cite{NorthPR:typtwf} uses a
category-theoretic abstraction of the notion of Moore paths, called
\emph{Moore relation systems}, as part of a complete analysis of when
a weak factorization system gives a model (in terms of display map
categories, rather than CwFs) of identity-{}, $\Sigma$-{} and
$\Pi$-types. A Moore relation system is a piece of category theory
comparable to our use of ordered abelian groups in categorical logic
in Section~\ref{sec:moopt}; it would be interesting to see if it can
be extended in the way we extended from groups to rings in
Section~\ref{sec:fune} in order to validate function extensionality.

Spitters~\cite[Section~3]{SpittersB:cubstt} uses a somewhat
different formulation of Moore path in the cubical
topos~\cite{CoquandT:cubttc}. His notion is the reflexive-transitive
closure of the usual path types given by the bounded interval. For
better properties and to get a closer relationship with our version,
one would like to quotient these cubical ``Spitters-Moore'' path
objects up to degenerate paths; but the undecidability of degeneracy
seems to stop one being able to do that while retaining the (uniform)
Kan-fibrancy of such path objects. Here we can side-step such issues,
since notably our models manage to avoid using a notion of Kan
fibrancy at all.

\section{Conclusion}
\label{sec:con}

We have shown that any connected ordered ring in a topos gives rise to
a model of Martin-L\"of's intensional Type Theory with universes in
which proofs of identity are given by Moore paths. We gave an example
to show that such models of type theory can contain highly non-trivial
identity types that faithfully represent the homotopy types of a wide
class of topological spaces and in particular are not truncated at any
level of iteration.

It is an open question whether there is an order-ringed topos that
gives rise to such a model of type theory containing a
\emph{univalent}~\cite[Section~2.10]{HoTT} universe. (We saw in
Section~\ref{sec:uni} that the universes constructed there are not
univalent.) The known examples of non-truncated univalent universes,
such as the classical simplicial sets model~\cite{LumsdainePL:simmuf}
and the various constructive cubical sets
models~\cite{CoquandT:modttc,CoquandT:cubttc,LicataDR:carctt} make use
of a modified form of the Hofmann-Streicher~\cite{HofmannM:lifgu}
construction in presheaf categories. Streicher~\cite{StreicherT:unit}
points out that the basic Hofmann-Streicher universe construction
works for sheaf toposes through a suitable use of sheafification. So
there are Hofmann-Streicher universes in both of the example toposes
from Section~\ref{sec:mod}, one of which is a presheaf topos and one a
sheaf topos. However, the analysis of \cite[Section~5]{PittsAM:intumh}
shows that in the examples of univalence mentioned above, one gets
from the Hofmann-Streicher universe to a univalent universe
classifying fibrations (of various kinds) by using the fact that in
those models the path functor $\Path(\_)$ has a right
adjoint. Unfortunately this is not the case for the models in
Section~\ref{sec:mod}, where it seems that the very property of the
interval (half-line) that allows us to avoid all uses of Kan filling
in favour of path composition when building our models of type theory,
namely the total order~\eqref{eq:1}, prevents the interval from being
``tiny'' and hence prevents $\Path(\_)$ from having a right adjoint.

\section*{Acknowledgement}

\noindent We are very grateful to Benno van~den~Berg and Bas Spitters
for discussions about the material in this paper. Orton was supported
by a PhD studentship from the UK EPSRC funded by grants EP/L504920/1
and EP/M506485/1.

\newcommand{\etalchar}[1]{$^{#1}$}

\end{document}